\spnewtheorem{prop}{Property}{\bfseries}{\itshape} 
\newcommand{\remove}[1]{}
\renewenvironment{proof}
{{\em Proof.\ }}{\hspace*{\fill}$\Box$\par\vspace{2mm}}
\renewcommand{\epsilon}{\varepsilon}
\newcommand{\E}{\mathcal{E}}
\newcommand{\seg}[1]{\overline{#1}}
\begin{document}
\title{Simultaneous Embeddings with \\Few Bends and Crossings 
  \thanks{Research of M.~Hoffmann and V.~Kusters is partially supported by
    the ESF EUROCORES programme EuroGIGA, CRP GraDR and the Swiss National
    Science Foundation, SNF Project 20GG21-134306.}
}
\author{Fabrizio Frati\inst{1} \and Michael Hoffmann\inst{2} \and Vincent Kusters\inst{2}}
\institute{Dipartimento di Ingegneria, University Roma Tre, Italy\\
\email{frati@dia.uniroma3.it}
\and Department of Computer Science, ETH Z\"urich, Switzerland\\
\email {\{hoffmann,vincent.kusters\}@inf.ethz.ch}}
\maketitle

\begin{abstract}
A \emph{simultaneous embedding with fixed edges} ({\sc Sefe}) of two planar graphs $R$ and $B$ 
is a pair of plane drawings of $R$ and $B$ that coincide when restricted to the common vertices and edges of $R$ and $B$. We show that whenever $R$ and $B$ admit a {\sc Sefe}, they also admit a {\sc Sefe} in which every edge is a polygonal curve with few bends and every pair of edges has few crossings. Specifically: (1) if $R$ and $B$ are trees then one bend per edge and four crossings per edge pair 
suffice (and one bend per edge is sometimes necessary), (2) if $R$ is a planar graph and $B$ is a tree then six bends per edge and eight crossings per edge pair suffice, and (3) if $R$ and $B$ are planar graphs then six bends per edge and sixteen crossings per edge pair suffice. Our results improve on a paper by Grilli et al{.} (GD'14), which proves that nine bends per edge suffice, and on a paper by Chan et al{.} (GD'14), which proves that twenty-four crossings per edge pair suffice. 
\end{abstract}


\section{Introduction}
\label{se:introduction}

Let $R=(V_R,E_R)$ and $B=(V_B,E_B)$ be two planar graphs sharing a \emph{common graph} $C=(V_R\cap V_B, E_R\cap E_B)$. The vertices and edges of $C$ are {\em common}, while the other vertices and edges are {\em exclusive}. We refer to the edges of $R$, $B$, and $C$ as the {\em red}, {\em blue}, and {\em black edges}, respectively. A \emph{simultaneous embedding} of $R$ and $B$ is a pair 
of plane drawings of $R$ and $B$, respectively, 
that agree on the common vertices (see Figs.~\ref{fig:se-sefe-sge-example-g1}--\ref{fig:se-sefe-sge-example-se}).


Simultaneous graph embeddings have been a central topic of investigation for the graph drawing community in the last decade, because of their applicability to the visualization of dynamic graphs and of multiple graphs on the same vertex set~\cite{bcdeeiklm-spge-07,ekln-sgd-05}, and because of the depth and breadth of the theory they have been found to be related to. 

\begin{figure}[t]
  \centering
  \subfloat[]{\label{fig:se-sefe-sge-example-g1}\includegraphics{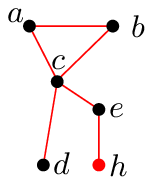}}\hfil%
  \subfloat[]{\label{fig:se-sefe-sge-example-g2}\includegraphics{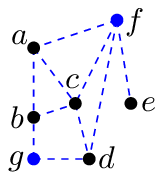}}\hfil%
  \subfloat[]{\label{fig:se-sefe-sge-example-se}\includegraphics{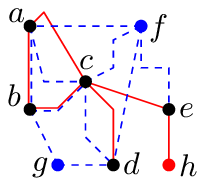}}\hfil%
  \subfloat[]{\label{fig:se-sefe-sge-example-sge}\includegraphics{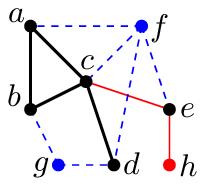}}\hfil%
  \subfloat[]{\label{fig:se-sefe-sge-example-sefe}\includegraphics{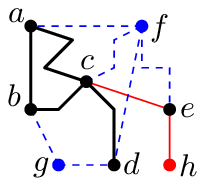}}\hfil%
  \caption{(a-b) $R$ and $B$ with $V_C=\{a,b,c,d,e\}$ and
    $E_C=\{(a,b),(b,c),(a,c),(c,d)\}$. (c) Simultaneous embedding of $R$ and
    $B$. (d) {\sc Sge} of $R$ and $B$. (e) {\sc Sefe} of $R$ and $B$.}
  \label{fig:se-sefe-sge-example}
\end{figure}

Brass~et~al{.}~\cite{bcdeeiklm-spge-07} initiated the research on this topic by investigating \emph{simultaneous geometric embeddings} (or {\sc Sge}s), which are simultaneous embeddings where all edges are represented by straight-line segments (see \figurename~\ref{fig:se-sefe-sge-example-sge}). This setting proved to be fairly restrictive: there exist two trees~\cite{geyer2009two} and even a tree and a path~\cite{angelini2011tree} with no {\sc Sge}. Furthermore, the problem of deciding whether two graphs admit an {\sc Sge} is NP-hard~\cite{estrella2008simultaneous}.

Two relaxations of {\sc Sge} have been considered in the literature in which edges are not forced to be straight-line segments. In the first setting, we look for a simultaneous embedding of two given planar graphs $R$ and $B$ in which every edge is drawn as a polygonal curve with few bends. Erten and Kobourov~\cite{ek-sepgfb-05} proved that three bends per edge always suffice, a bound which has been improved to two bends per edge by Di Giacomo and Liotta~\cite{gl-seogpc-07}. If $R$ and $B$ are trees, then one bend per edge is sufficient~\cite{ek-sepgfb-05}.  Note that black edges may be represented by different curves in each drawing. The variant in which the edges of $R$ and $B$ might only cross at right angles has also been considered~\cite{bdkw-sdpgrac-15}. In the second setting, we look for a {\em simultaneous embedding with fixed edges} (or {\sc Sefe}) of $R$ and $B$: a simultaneous embedding in which every common edge is represented by the same simple curve in the plane (see \figurename~\ref{fig:se-sefe-sge-example-sefe}). In other words, a {\sc Sefe} is a drawing $\Gamma$ of the {\em union graph} $(V_R\cup V_B,E_R\cup E_B)$ such that $\Gamma\vert_R$ 
is a plane drawing of $R$ and $\Gamma\vert_B$ is a plane drawing of $B$. While not every two planar graphs admit a {\sc Sefe}, this setting is substantially less restrictive than {\sc Sge}: for example, every tree and every planar graph admit a {\sc Sefe}~\cite{frati2007embedding}. Determining the complexity of deciding whether two given graphs admit a {\sc Sefe} is a major open problem in the field of graph drawing. Polynomial-time testing algorithms are known in many restricted cases, such as when the common graph $C$ is biconnected~\cite{angelini2012testing}, when $C$ is a set of disjoint cycles~\cite{blasius2015disconnectivity}, or when $R$ is a planar graph and $B$ is a graph with at most one cycle~\cite{fowler2009spqr}. We refer to an excellent survey by Bl\"asius~et~al{.}~\cite{bkr-sepg-13} for many other results.

In this paper we present algorithms to construct {\sc Sefe}s in which edges are represented by polygonal curves. For the purpose of guaranteeing the readability of the representation, we aim at minimizing two natural and well-studied aesthetic criteria in the constructed {\sc Sefe}s: the number of bends per edge and the number of crossings per edge pair. Both criteria have been recently and separately considered in relation to the construction of a {\sc Sefe}. Namely, Grilli et al{.}~\cite{ghkr-dsegfb-14} proved that every combinatorial {\sc Sefe} 
can be realized as a {\sc Sefe} with at most nine bends per edge, a bound which improves to three bends per edge when the common graph is biconnected. Further, Chan et al{.}~\cite{cfglms-dpespg-14} proved that if $R$ and $B$ admit a {\sc Sefe}, then they admit a {\sc Sefe} in which every red-blue edge pair crosses at most twenty-four times.


\subsubsection{Contribution.} In this paper we improve on the results of Grilli et al{.}~\cite{ghkr-dsegfb-14} and of Chan et al{.}~\cite{cfglms-dpespg-14} by proving the following results.

\begin{enumerate}
\item If $R$ and $B$ are both trees, then they admit a {\sc Sefe} with one bend per edge. Consequently, every edge pair crosses at most four times. The number of bends is the best possible, since there exist two trees that do no admit a {\sc Sefe} with no bends~\cite{geyer2009two}. 
\item If $R$ is a planar graph and $B$ is a tree, then they admit a {\sc Sefe} with six bends per edge in which every two exclusive edges cross at most eight times. 
\item If $R$ and $B$ are planar graphs that admit a {\sc Sefe}, then they admit a {\sc Sefe} with six bends per edge in which every two exclusive edges cross at most sixteen times. In all cases, the common edges are drawn as straight-line segments. 
\end{enumerate}

The rest of the paper is organized as follows. In Section~\ref{se:preliminaries} we establish some preliminaries. In Sections~\ref{se:trees},~\ref{se:tree-planar}, and~\ref{se:planar-planar}, we present our results on tree--tree pairs, on tree--planar pairs, and on planar--planar pairs, respectively. Finally, in Section~\ref{se:conclusions} we conclude and suggest some open problems.


\section{Preliminaries}
\label{se:preliminaries}

A {\em plane drawing} of a (multi)graph $G$ is a mapping of each vertex to a point in the plane, and of each edge to a simple curve connecting its endvertices such that no two edges cross. A plane drawing of $G$ determines a circular ordering of the edges incident to each vertex of $G$; the set of these orderings is called a {\em rotation system}. Two plane drawings of $G$ are \emph{equivalent} if they have the same rotation system, the same containment relationship between cycles, and the same outer face (the second condition is redundant if $G$ is connected). A \emph{planar embedding} is an equivalence class of plane drawings. 

Analogously, a {\sc Sefe} of two planar graphs $R$ and $B$ determines a circular ordering of the edges incident to each vertex (comprising edges incident to both $R$ and $B$); the set of these orderings is the {\em rotation system} of the {\sc Sefe}. Two {\sc Sefe}s of $R$ and $B$ are \emph{equivalent} if they have the same rotation system and if their restriction to the vertices and edges of $R$ (of $B$) determines two equivalent plane drawings of $R$ (resp. of $B$). Finally, a \emph{combinatorial {\sc Sefe}} $\E$ for two planar graphs $R$ and $B$ is an equivalence class of {\sc Sefe}s; we denote by $\E\vert_R$ (by $\E\vert_B$) the planar embedding of $R$ (resp. of $B$) obtained by restricting $\E$ to the vertices and edges of $R$ (resp. of $B$).

A {\em subdivision} of a multigraph $G$ is a graph $G'$ obtained by replacing edges of $G$ with paths, whose internal vertices are called {\em subdivision vertices}. If $G'$ is a subdivision of $G$, the operation of {\em flattening} subdivision vertices in $G'$ returns $G$. 
%
The {\em contraction} of an edge $(u,v)$ in a multigraph $G$ leads to a
multigraph $G'$ by replacing $(u,v)$ with a vertex $w$ incident
to all the edges $u$ and $v$ are incident to in $G$; $k$
parallel edges $(u,v)$ in $G$ lead to $k-1$ self-loops
incident to $w$ in $G'$ (the contracted edge itself is not in $G'$).  If $G$ has a planar embedding ${\cal E}_G$, then $G'$ {\em inherits} a planar embedding ${\cal E}_{G'}$ as follows. Let
$a_1,\dots,a_k,v$ and $b_1,\dots,b_\ell,u$ be the clockwise orders of the
neighbors of $u$ and $v$ in ${\cal E}_G$, respectively. 
Then the clockwise order of the neighbors of $w$ is $a_1,\dots,a_k,b_1,\dots,b_\ell$.
The contraction of a connected graph is the contraction of all its edges.




The straight-line segment between points $p$ and $q$ is denoted by $\seg{pq}$. The {\em angle} of $\seg{pq}$ is the angle between the ray from $p$ in positive $x$-direction and the ray from
$p$ through $\seg{pq}$. A polygon $P$ is {\em strictly-convex} if, for any two non-consecutive vertices $p$ and $q$ of $P$, the open segment $\seg{pq}$ lies in the interior of $P$; also, $P$ is {\em star-shaped} if a point $p^*$ exists such that, for any vertex $p$ of $P$, the open segment $\seg{pp^*}$ lies in $P$; the {\em kernel} of $P$ is the set of all such points $p^*$.

A {\em 1-page book embedding} (1PBE) is a plane drawing where all vertices are placed on an oriented line $\ell$ called \emph{spine} and all edges are curves in the halfplane to the left of $\ell$. A {\em 2-page book embedding} (2PBE) is a plane drawing where all vertices are placed on $\ell$ and each edge is a curve in one of the two halfplanes delimited by~$\ell$.


\section{Two Trees} \label{se:trees} In this section we describe an
algorithm that computes a {\sc Sefe} of any two trees $R$ and
$B$ with one bend per edge. Let $C$ be the common graph of $R$ and $B$. 

The outline of the algorithm is as follows. In Step~1, we compute a combinatorial {\sc Sefe} of $R$ and $B$ with the property that at every common vertex $v$, all the black edges are consecutive in the circular order of edges incident to $v$. In Step~2, we contract each component of $C$ to a single vertex, obtaining trees $R'$ from $R$ and $B'$ from $B$.  In Step~3, we independently augment $R'$ and $B'$  to Hamiltonian planar graphs, so as to satisfy topological constraints that are necessary for the subsequent drawing algorithms. In Step~4, we use the Hamiltonian augmentations to construct a simultaneous embedding of $R'$ and $B'$ with one bend per edge; this step is reminiscent of an algorithm of Erten and Kobourov~\cite{ek-sepgfb-05}. Finally, in Step~5, we expand the components of
$C$. This consists of modifying the simultaneous embedding of $R'$ and
$B'$ in a small neighborhood of each vertex to make room for the
components of $C$. We now describe these steps in detail.

\subsubsection{Step~1: Combinatorial Sefe.} Fix the clockwise order of the edges incident to each vertex as follows: all the black edges in any order, then all the red edges in any order, and then all the blue edges in any order (each sequence might be empty). As any rotation system for a tree determines a planar embedding for it, this results in a combinatorial {\sc Sefe} $\E$ of $R$ and $B$. See \figurename~\ref{fig:csefe-csefe}. We may assume that every component $S$ of $C$ is incident to at least one red and one blue edge: If $S$ is not incident to any, say, blue edge, then $B$$=$$S$$=$$C$, since $B$ is connected, and any plane straight-line drawing of $R$ is a {\sc Sefe} of $R$ and $B$.


\begin{figure}[tb]
  \centering%
  \subfloat[]{\label{fig:csefe-csefe}\includegraphics{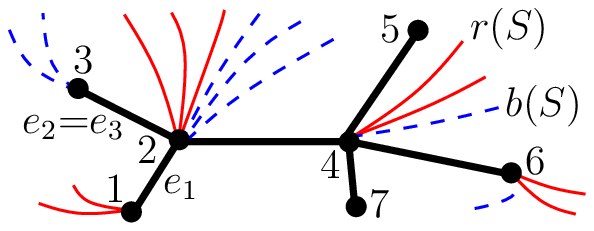}}\hfil%
  \subfloat[]{\label{fig:csefe-redcontracted}\includegraphics{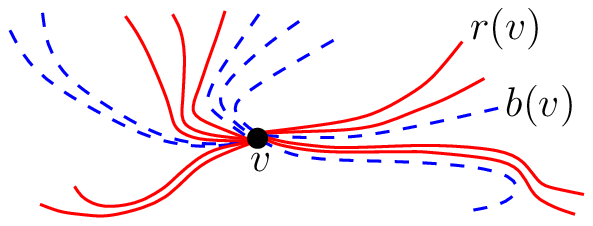}}\hfil%
  \caption{(a) A connected component $S$ of $C$, together with its incident exclusive edges. (b) Vertex $v$ resulting from the contraction of $S$. 
\label{fig:csefe}}
\end{figure}


\remove{
\begin{figure}[b]
  \centering%
  \subfloat[]{\label{fig:csefe-red}\includegraphics[scale=0.51]{Trees-Instance1}}\hfil%
  \subfloat[]{\label{fig:csefe-blue}\includegraphics[scale=0.51]{Trees-Instance2}}\hfil%
  \subfloat[]{\label{fig:csefe-csefe}\includegraphics[scale=0.51]{CombinatorialSEFE}}\hfil%
  \subfloat[]{\label{fig:csefe-redcontracted}\includegraphics[scale=0.51]{CombinatorialSEFE-contracted}}\hfil%
  \caption{(a-b) Trees $R$ and $B$. (c) A combinatorial {\sc Sefe} of
    $R$ and $B$. (d) Trees $R'$ and $B'$ are obtained by contracting the
    connected components of $C$.\label{fig:csefe}}
\end{figure}
}


For every component $S$ of $C$ we pick two incident edges $r(S)$ and $b(S)$ as follows. In any {\sc Sefe} equivalent to $\E$ let $\gamma$ be a simple closed curve surrounding $S$ and close enough to it so that $\gamma$ has no crossing in its interior. Note that $\gamma$ intersects all the exclusive edges incident to $S$ in some clockwise order in which all the exclusive edges incident to a single vertex of $S$ appear consecutively. Let $r(S)$ be any red edge not preceded by a red edge in this order and let $b(S)$ be the first blue edge after $r(S)$. We define a total ordering $\varrho_S$ of the vertices of $S$, as the order in which their exclusive edges intersect $\gamma$ (a curve is added incident to every vertex of $S$ with no incident exclusive edge for this purpose), where the first vertex of $\varrho_S$ is the endvertex of $r(S)$. We have the following. 

\begin{lemma} \label{le:planar-tree-convex-position}
The straight-line drawing of $S$ obtained by placing its vertices on a strictly-convex curve $\lambda$ in the order defined by $\varrho_S$ is plane.
\end{lemma}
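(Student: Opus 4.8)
The plan is to reduce the claim to the standard crossing criterion for convex position and then to read off the required non-crossing property from the planar embedding encoded by $\gamma$. Since $C$ is a subgraph of the forest obtained from the trees $R$ and $B$, it is itself a forest, so the component $S$ is a tree; I will use this throughout. Recall first that if the vertices of $S$ are placed on the strictly-convex curve $\lambda$ and edges are drawn as straight segments, then strict convexity forbids three collinear vertices and forbids an edge from passing through a third vertex, so the only possible crossings are between two \emph{independent} edges $(a,b)$ and $(c,d)$ whose four endpoints \emph{alternate} around $\lambda$. As this alternation depends only on the cyclic order of the vertices around $\lambda$ --- which is exactly the cyclic order in which the exclusive blocks of the vertices meet $\gamma$, cut at the endvertex of $r(S)$ --- it suffices to prove that no two independent edges of $S$ have alternating endpoints along $\gamma$.

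To exploit the embedding, I would move everything inside the disk bounded by $\gamma$. Fix a {\sc Sefe} equivalent to $\E$ in which $\gamma$ has no crossing in its interior. By Step~1, at each vertex $v$ of $S$ all black edges precede all exclusive edges, so the exclusive edges at $v$ occupy a single angular block and, by the stated property of $\gamma$, meet $\gamma$ along one contiguous arc; let $p_v$ be a point of $\gamma$ in that arc (for a vertex without exclusive edges, use the auxiliary curve introduced for $\varrho_S$). Inside the block of $v$ I can route a simple curve $\tau_v$ from $v$ to $p_v$ that crosses neither the edges of $S$ nor the curves $\tau_u$ of the other vertices. Attaching these pendant curves to $S$ produces a tree $T'$, drawn without crossings inside the disk, all of whose leaves $p_v$ lie on the boundary $\gamma$ in precisely the cyclic order that defines $\varrho_S$.

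The core of the argument, and the step I expect to be the main obstacle, is a Jordan-curve argument that turns planarity of $T'$ into the absence of alternations. Suppose, for contradiction, that two independent edges $(a,b)$ and $(c,d)$ of $S$ have alternating endpoints, so that $p_a,p_c,p_b,p_d$ appear in this cyclic order on $\gamma$. In $T'$ the unique path from $p_a$ to $p_b$ consists of $\tau_a$, the edge $(a,b)$, and $\tau_b$, and similarly the unique path from $p_c$ to $p_d$ uses $(c,d)$; call these arcs $P_1$ and $P_2$. Both lie in the closed disk with only their endpoints on $\gamma$, and their endpoints alternate on the boundary circle, so $P_1$ and $P_2$ must cross transversally. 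But $T'$ is a plane tree, so any two of its paths meet only along a common subpath and never cross transversally, a contradiction. Hence no two independent edges of $S$ alternate along $\gamma$, the convex straight-line drawing has no crossings, and the lemma follows.

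The delicate point is the assertion that alternating endpoints force a \emph{transversal} crossing: two arcs in a disk whose endpoints interleave on the boundary cannot meet merely tangentially, whereas in a planar drawing of a tree two paths that share structure only touch. Making this dichotomy rigorous --- rather than invoking the Jordan curve theorem informally --- is where the real work lies; everything else is bookkeeping about the block structure guaranteed by Step~1.
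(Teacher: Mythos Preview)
Your argument is correct and is essentially the paper's own proof: both reduce the claim to showing that no two edges of $S$ have endpoints alternating in the cyclic order induced by $\gamma$, and both extract this non-alternation from the crossing-free picture inside the disk bounded by $\gamma$. The only difference is cosmetic---the paper deforms $\gamma$ \emph{inward} along the exclusive edges until it passes through every vertex of $S$ (so $S$ becomes outerplanar with respect to the deformed curve), whereas you grow pendant arcs $\tau_v$ \emph{outward} from each vertex to $\gamma$; these are dual realizations of the same idea, and your worry about transversality is unnecessary, since $P_1$ and $P_2$ are vertex-disjoint in the plane drawing of $T'$ and hence have disjoint images, which already contradicts the Jordan separation.
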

\begin{proof}
For every vertex $v$ of $S$, shrink $\gamma$ along an exclusive edge incident to $v$ so that $\gamma$ passes through $v$ and still every edge of $S$ lies in its interior. Eventually $\gamma$ passes through all the vertices of $S$ in the order $\varrho_S$. The planarity of the drawing of $S$ implies that there are no two edges whose endvertices alternate along $\gamma$. Then placing the vertices of $S$ on $\lambda$ in the order $\varrho_S$ leads to a plane straight-line drawing of $S$.
\end{proof}

\subsubsection{Step~2: Contractions.} Contract each component $S$ of $C$ to a
single vertex $v$. The resulting trees $R'$$=$$(V'_R,E'_R)$ and $B'$$=$$(V'_B,E'_B)$ have planar embeddings $\E_{R'}$ and $\E_{B'}$  inherited from $\E_R$ and $\E_B$, respectively. Vertex $v$ is common to $R'$ and $B'$; let $r(v)$ and $b(v)$ be the edges corresponding to $r(S)$ and
$b(S)$ after the contraction. See \figurename~\ref{fig:csefe-redcontracted}.

\subsubsection{Step~3: Hamiltonian augmentations.} We describe this step for $R'$ only;
the treatment of $B'$ is analogous and independent. The goal is to find a vertex order corresponding to a 1PBE of $R'$. All edges between consecutive vertices along the spine $\ell$, as well as the edge between the first and last vertex along $\ell$, can be added to a 1PBE while maintaining planarity, hence the 1PBE is essentially a Hamiltonian augmentation of $R'$. For Step~5 we need to place $r(v)$, for each common vertex $v$, as in the following.
\begin{lemma}\label{lem:be}
  There is a 1PBE for $R'$ equivalent to $\E_{R'}$ such that for every common vertex $v$, the spine passes through $v$ right before $r(v)$ in clockwise order around $v$.
\end{lemma}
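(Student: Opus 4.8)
The plan is to first realize $\E_{R'}$ as a 1PBE at all, and then to use the remaining freedom in the linearization to place the spine correctly at every common vertex. Since $R'$ is a tree it has a single face, so two plane drawings of $R'$ are equivalent exactly when they induce the same rotation system; thus it suffices to produce a 1PBE whose rotation system equals that of $\E_{R'}$ and which meets the spine condition. For the first part I would use that every tree is outerplanar: placing the vertices on the spine in the order in which they are first encountered along the boundary walk of the single face of $\E_{R'}$, and drawing every edge as an arc to the left of $\ell$, yields non-crossing arcs—each subtree occupies a contiguous interval of the spine—whose rotation system is exactly that of $\E_{R'}$. Any such linearization is already a 1PBE equivalent to $\E_{R'}$; what remains is to pick the right one.

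Next I would reduce the spine condition to a statement about corners. In any 1PBE all edges incident to a vertex $v$ lie on one side of $\ell$, so in the clockwise order around $v$ the two rays of the spine bound a single edge-free region below $\ell$; this region occupies exactly one corner of $\E_{R'}$ at $v$. Consequently, ``the spine passes through $v$ right before $r(v)$'' is equivalent to asking that this below-spine corner be the corner immediately preceding $r(v)$ clockwise. Hence the lemma reduces to choosing the linearization so that, at every common vertex $v$ simultaneously, the below-spine corner is the one just before $r(v)$.

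Here I would exploit the definition of $r(S)$. By construction $r(S)$ is a red edge that along $\gamma$ is not preceded by a red edge, hence is immediately preceded by a blue edge (or by the black part internal to $S$). Therefore, in $R'$—where the blue edges are gone and the black components are contracted—the corner just before $r(v)$ is precisely the corner that in the full combinatorial SEFE faced the blue/contracted side, i.e.\ a corner incident to the outer region of the red drawing. The plan is to choose the cut of the boundary walk so that the below-spine region threads through exactly these outer corners: I would build the spine order by traversing $\E_{R'}$ and, upon first reaching each common vertex $v$, entering it through the corner just before $r(v)$. The extra freedom of non-crossing arc orders (which is strictly larger than that of plain depth-first orders) should leave room to route the spine through the prescribed corner at each common vertex.

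The hard part will be the simultaneity in this last step: the below-spine corners at different vertices are not independent, since they all arise from a single linear order, so I must exhibit one order realizing the target corner at every common vertex at once. I expect to settle this by an induction that peels the tree along the chosen corners—orienting the edges $r(v)$ to guide the traversal and laying out the corresponding subtrees in disjoint spine intervals—while checking that inserting each new vertex neither creates a crossing nor disturbs a corner already fixed. The defining property of $r(S)$ (first red edge after a blue one along $\gamma$, so that the targeted corner is outer-facing) is exactly what should guarantee that the prescribed corners are mutually compatible, so that no two of them compete for the same position along the spine; verifying this compatibility is the main technical content of the proof.
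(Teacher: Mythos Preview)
Your reduction of the spine condition to ``choose, at each vertex, which corner lies below the spine'' is correct and is exactly the right way to think about it. Your inductive plan in the last paragraph---root the tree, lay out subtrees in disjoint spine intervals, and at each vertex steer the placement so that the prescribed corner ends up below---is also essentially what the paper does.

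Where you go astray is in the third paragraph: you believe the special choice of $r(S)$ (a red edge not preceded by a red edge along $\gamma$) is what makes the prescribed corners \emph{mutually compatible}, and you propose to exploit this via an ``outer-facing'' argument. This is both unnecessary and confused. It is unnecessary because for a tree with a fixed rotation system, \emph{any} choice of one incident edge $r(v)$ per vertex can be realized by a 1PBE in which the spine enters $v$ just before $r(v)$; no global compatibility condition is required. The paper makes this explicit by first extending $r(\cdot)$ arbitrarily to the exclusive vertices and then giving a recursive construction that works for this arbitrary extension: place the root, place the endpoint of $r(\text{root})$ immediately after it, fill in the remaining neighbours in the order dictated by $\E_{R'}$, and recurse into each subtree using its own disjoint spine interval; at a non-root vertex $v$ with parent $p$, place the other endpoint of $r(v)$ immediately before $v$ (unless it is $p$, which is already there) and distribute the remaining children according to $\E_{R'}$. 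Your ``outer-facing'' reasoning is confused because $R'$ is a tree, so it has a single face and every corner is incident to it; there is nothing for the definition of $r(S)$ to buy here. That definition matters later (for the expansion step, via $\varrho_S$), not for this lemma.

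So the gap is not that your induction would fail---it would succeed---but that you have misidentified the crux. Drop the appeal to the definition of $r(S)$ and simply carry out the recursion; you will find no obstruction.
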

\begin{proof}
  We construct the embedding recursively. For each exclusive vertex $v$, let $r(v)$ be an arbitrary edge incident to $v$. Arbitrarily choose a vertex $s$ as the root of $R'$ and place $s$ on $\ell$. Place the other endpoint of $r(s)$ after $s$ on $\ell$ and all remaining neighbors of $s$, if any, in between in the order given by $\E_{R'}$. Then process every child $v$ of $s$ (and the subtree below $v$) recursively as follows  (and ensure that all subtrees stay in pairwise disjoint parts of the spine, for instance, by assigning a specific region to each). 


  \begin{figure}[tb]
    \centering%
    \subfloat[]{\label{fig:recursivebookembedding-a}\includegraphics{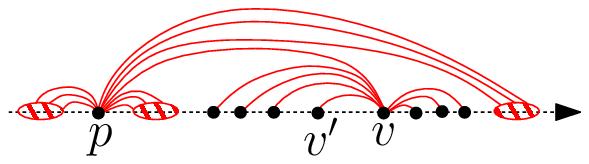}}\hfil%
    \subfloat[]{\label{fig:recursivebookembedding-b}\includegraphics{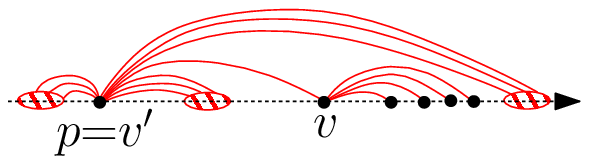}}\hfil%
    \caption{Embedding the children of $v$ if (a) $p\neq v'$ or (b) $p= v'$. Parts of the embedding already constructed are in the shaded regions.\label{fig:recursive-book}}
  \end{figure}

  Note that both $v$ and the parent $p$ of $v$
  are already embedded. By symmetry we can assume that $p$ lies before $v$ on the spine. Let $v'$ be the endvertex of $r(v)$ different from $v$. If $p\neq v'$, we place the other endvertex of $r(v)$ right before $v$. Both if $p\neq v'$ (see \figurename~\ref{fig:recursivebookembedding-a}) and if $p=v'$ (see \figurename~\ref{fig:recursivebookembedding-b}), we place the other children of $v$, if any, according to $\E_{R'}$, in the parts of the spine between $p$ and $v'$, and after $v$. If $v$ is not a leaf, then all its children are processed recursively in the
  same fashion.  It is easily checked that the resulting embedding is a 1PBE that
  satisfies the stated properties.
\end{proof}

\subsubsection{Step~4: Simultaneous embedding.} We now construct a simultaneous embedding of $R'$ and $B'$. In such an embedding let $\sigma_v$ denote the linear order of the edges around each vertex $v$ obtained by sweeping a ray clockwise around $v$, starting in direction of the negative $x$-axis.

\begin{lemma}
\label{lem:tree-tree-se}
For every $\epsilon>0$, $R'$ and $B'$ admit a simultaneous
embedding with one bend per edge in which:

\begin{itemize}
\item all edges of $E_{R'}$ ($E_{B'}$) incident to a vertex $v$ in $V_R'$ (resp.\ $V_B'$) leave $v$ within an angle of $[-\epsilon;+\epsilon]$ with respect to the positive $y$-direction (resp.\ $x$-direction);
\item the drawing restricted to $R'$ (to $B'$) is equivalent to $\E_{R'}$ (resp. to $\E_{B'}$); and
\item for every common vertex $v$, the first red (blue) edge in $\sigma_v$ is $r(v)$ (resp. $b(v)$).
\end{itemize}
\end{lemma}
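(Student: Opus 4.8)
The plan is to follow the Erten--Kobourov paradigm: turn the two $1$PBEs from Lemma~\ref{lem:be} (one for $R'$ and, symmetrically, one for $B'$) into a single vertex placement, and then route the red and blue edges as one-bend polylines whose single bend is pushed far away, so that after a suitable scaling every red edge is almost vertical and every blue edge is almost horizontal. The key simplification is that the contraction in Step~2 has removed all common \emph{edges}, so $R'$ and $B'$ share only their (common) vertices; hence red--red and blue--blue crossings are the only ones we must avoid, while red--blue crossings are unconstrained. This is what will let the two colours be handled essentially independently.

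First I would fix the placement. Let $\rho$ be the spine order of the $1$PBE of $R'$ and let $\tau$ be the analogous order of the $1$PBE of $B'$. I place each common vertex $v$ at the grid point $(\rho(v),\tau(v))$, so that the $x$-order of the red vertices realizes $\rho$ and the $y$-order of the blue vertices realizes $\tau$; a red-exclusive (resp.\ blue-exclusive) vertex is constrained only in its $x$- (resp.\ $y$-) coordinate, and its free coordinate is chosen generically so that no edge passes through a non-incident vertex. For the routing, I view the red $1$PBE as a family of nested/disjoint intervals on the spine (the $x$-axis) and draw each red edge $(u,w)$ as a ``caret'': two segments from $u$ and from $w$ meeting at a single bend (its apex) placed high above all vertices in a far ``sky'' region, with the apex's $x$-coordinate inside $[x_u,x_w]$ and the apex heights ordered by interval containment. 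Blue edges are routed symmetrically as right-pointing carets whose apices lie far to the right. Given $\epsilon$, I push the red sky high enough and the blue apices far enough to the right that on every red (resp.\ blue) segment the vertical (resp.\ horizontal) extent dominates the orthogonal one; then every red edge leaves both endpoints within $\epsilon$ of $+y$, every blue edge within $\epsilon$ of $+x$, and every edge has exactly one bend.

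For correctness I would verify three points. \emph{Planarity:} since all apices lie above every vertex, a red caret is, up to the almost-vertical perturbation, determined by its $x$-interval, and two red carets cross only if their intervals properly interleave, which the $1$PBE forbids; hence the red subdrawing is plane, symmetrically for blue, and red--blue crossings are permitted. \emph{Equivalence to $\E_{R'}$:} as $R'$ is a tree, an embedding is determined by its rotation system and outer face, and the caret realization reproduces exactly the rotation system of the $1$PBE, which is equivalent to $\E_{R'}$ (symmetrically for $B'$). \emph{The extremal-edge condition:} in $\sigma_v$ the sweep starts at $-x$ and meets the red wedge (around $+y$) before the blue wedge (around $+x$), so all red edges precede all blue edges; within the red wedge the first edge encountered is the one leaning furthest toward $-x$, and the placement guaranteed by Lemma~\ref{lem:be} (the spine lies immediately clockwise before $r(v)$) is precisely what forces this edge to be $r(v)$, with $b(v)$ arising symmetrically as the first blue edge.

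The main obstacle is that the placement is rigid: a common vertex's height comes from the blue order $\tau$, which is unrelated to the red structure, and its abscissa comes from the red order $\rho$, unrelated to the blue structure, so one cannot simply draw each colour as a textbook $1$PBE on its own horizontal spine. The device that rescues planarity is exactly the far placement of the single bends together with the scaling: it makes each red edge almost vertical \emph{regardless} of the heights of its endpoints, reducing red planarity to the interval non-crossing property of the $1$PBE, and likewise for blue; and since there are no common edges and red--blue crossings are free, the choices for the two colours never conflict. I expect the fiddly part to be checking that the almost-vertical (almost-horizontal) carets are genuinely crossing-free, i.e.\ that the perturbation away from the idealized vertical/horizontal picture introduces no crossings and preserves the prescribed rotation system; this is where the ordering of apex heights by interval containment and the choice of scaling in terms of $\epsilon$, $n$, and the nesting depth must be pinned down.
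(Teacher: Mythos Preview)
Your approach is essentially the paper's: assign $x$- and $y$-coordinates from the two 1PBE spine orders (the paper sets the free coordinate of exclusive vertices to $0$ rather than ``generically''), and draw each red (blue) edge as a one-bend caret opening upward (rightward). For the part you flag as fiddly, the paper avoids a global apex-height-by-nesting plus scaling argument and instead processes the red edges in a linearization of the nesting partial order, choosing for each edge $(u,v)$ a fresh $\epsilon_{uv}<\epsilon$ so small that the rays from $u$ and $v$ at angles $\pi/2\mp\epsilon_{uv}$ miss every vertex and every previously drawn edge---this greedy per-edge choice makes planarity, the rotation system of $\E_{R'}$, and hence the $r(v)$/$b(v)$ condition immediate without any separate perturbation analysis.
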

\begin{proof}
Our algorithm is very similar to algorithms due to Brass et al{.}~\cite{bcdeeiklm-spge-07} and Erten and Kobourov~\cite{ek-sepgfb-05}. These algorithms, however, do not guarantee the construction of a simultaneous embedding in which the order of the edges incident to each vertex is as stated in the lemma. This order is essential for the upcoming expansion step.

We assign the $x$-coordinates $1,\dots,|V_{R'}|$ ($y$-coordinates $|V_{B'}|,\dots,1$) to the vertices of
  $R'$ (resp.\ of $B'$) according to the order in which they occur on the spine in the
  1PBE of $R'$ (resp. of $B'$) computed in Lemma~\ref{lem:be}. This
  determines the placement of every vertex in $V_{R'}\cap V_{B'}$. Set any not-yet-assigned coordinate to $0$.  

We explain how to draw the edges of $R'$: the construction for $B'$ is
  symmetric. The idea is to realize the 1PBE of $R'$ with its
  vertices placed as above and its edges drawn as $x$-monotone
  polygonal curves with one bend. We proceed as follows. The
  1PBE of $R'$ defines a partial order of the edges corresponding to the
  way they nest. For example, denoting the vertices by their order along the spine, edge $(3,4)$ preceeds $(3,5)$ and $(2,5)$, while $(1,2)$ and $(6,7)$ are incomparable. We draw the edges of $R'$ in any linearization of this partial order. Suppose we have drawn some edges and let $(u,v)$ be the next edge to be drawn. Assume w.l.o.g. that the $x$-coordinate of $u$ is smaller than the one of $v$. For some $\epsilon_{uv}>0$, consider the ray
  $\varrho_u$ emanating from $u$ with an angle of $\pi/2-\epsilon_{uv}$
  (with respect to the positive $x$-axis). Similarly, let $\varrho_v$ be
  the ray emanating from $v$ with an angle of $\pi/2+\epsilon_{uv}$.  We
  choose $\epsilon_{uv}<\epsilon$ sufficiently small so that:
  \begin{enumerate}[(1)]\setlength{\itemindent}{2\labelsep}
  \item no vertex in $V_{R'}\setminus\{u\}$ lies in the region to the left
    of the underlying (oriented) line of $\varrho_u$ and to the right of
    the vertical line through $u$;
  \item no vertex in $V_{R'}\setminus\{v\}$ lies in the region to the right
    of the underlying (oriented) line of $\varrho_v$ and to the left of
    the vertical line through $v$; and
  \item neither $\varrho_u$ nor $\varrho_v$ intersects any previously
    drawn edge.
  \end{enumerate}

  As no two vertices of $R'$ have the same $x$-coordinate, we can choose $\epsilon_{uv}$ as claimed. The
  corresponding rays $\varrho_u$ and $\varrho_v$ intersect in some
  point: this is where we place the bend-point of $(u,v)$. The
  resulting drawing is equivalent to the 1PBE of $R'$ and therefore to
  $\E_{R'}$. The remaining claimed properties are preserved from the
  1PBE.
\end{proof}

\subsubsection{Step~5: Expansion.} We now expand the components of $C$ in the
drawing produced by Lemma~\ref{lem:tree-tree-se} one by one in any order. Let $\Gamma$ be the current drawing, $v$ be a vertex corresponding to a not-yet-expanded component $S$ of $C$, and $p$ be the point on which $v$ is placed in $\Gamma$. Note that the red and blue edges incident to $v$ may be
incident to different vertices in $S$. Let $\sigma_v=(e_1,\dots,e_\ell)$, where $e_1,\dots,e_k$ are
red and $e_{k+1},\dots,e_\ell$ are blue. By Lemma~\ref{lem:tree-tree-se}, $r(v)=e_1$ and
$b(v)=e_{k+1}$. Each edge incident to $v$ is drawn as a
polygonal curve with one bend. Let $b_i$ be the bend-point of $e_i$. The plan is to delete $p$ and segments $\seg{pb_i}$ in $\Gamma$ to obtain $\Gamma'$. Then draw $S$ in $\Gamma'$ inside a small disk around $p$ and draw segments from $S$ to $b_1,\dots,b_\ell$. See
\figurename~\ref{fig:tree-tree-expansion}. For an $\epsilon\geq 0$, let $D_\epsilon$ be the disk with
radius $\epsilon$ centered at $p$. Let $\Gamma_R$ ($\Gamma_R'$) be the
restriction of $\Gamma$ (resp.\ $\Gamma'$) to the red and black edges. We
state the following propositions only for the red graph; the 
propositions for the blue graph are analogous. By continuity, $v$ can be moved around slightly in $\Gamma_R$ while maintaining a plane drawing for the red graph. This implies the following.

\begin{figure}
  \centering
  \includegraphics{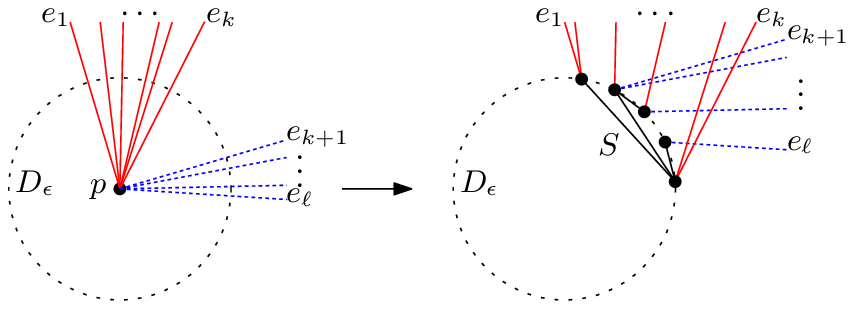}
  \caption{Expanding a component $S$ in a small disk $D_\epsilon$ around
    $p$.}
  \label{fig:tree-tree-expansion}
\end{figure}

\begin{proposition}
  \label{prop:tree-tree-expansion-global}
  There exists a $\delta_R>0$ with the following property. For every
  drawing $\Gamma_R^*$ obtained from $\Gamma_R'$ by drawing $S$ in
  $D_{\delta_R}$, the red segments from $S$ to $b_1,\dots,b_k$ do not
  cross any segment already present in $\Gamma_R'$.
\end{proposition}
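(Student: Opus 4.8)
The plan is to exploit the continuity/perturbation argument already set up in the paragraph preceding the statement. Recall that $v$ is placed at the point $p$ in $\Gamma_R$, and that by continuity $v$ can be moved within some small disk around $p$ while the red graph stays plane. The proposition replaces the single vertex $v$ by the drawing of the component $S$ squeezed into a tiny disk $D_{\delta_R}$, and asks that the red segments reconnecting $S$ to the fixed bend-points $b_1,\dots,b_k$ introduce no crossings with the part of $\Gamma_R'$ that survives the deletion of $p$ and the segments $\seg{pb_i}$.

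First I would fix the combinatorial target. In $\Gamma_R$ the red segments $\seg{pb_1},\dots,\seg{pb_k}$ leave $p$ in the rotation prescribed by $\sigma_v$ (with $r(v)=e_1$ first), and they are pairwise non-crossing and cross nothing else in $\Gamma_R$. The key structural fact I would invoke is Lemma~\ref{le:planar-tree-convex-position}: it guarantees that placing the vertices of $S$ on a strictly-convex curve in the order $\varrho_S$ yields a plane straight-line drawing of $S$, and moreover the order $\varrho_S$ is exactly the cyclic order in which the exclusive edges of $S$ leave the component, which matches the partition of $\sigma_v$ into the blocks of edges belonging to each vertex of $S$. So the plan is to place this convex drawing of $S$ inside $D_{\delta_R}$ so that each vertex of $S$ lies on the ``ray'' toward the bend-points corresponding to its incident exclusive edges; the reconnecting segments then emanate from $D_{\delta_R}$ in precisely the same cyclic order as the old segments $\seg{pb_i}$ did from $p$.

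Next I would make the crossing-freeness quantitative by a compactness/shrinking argument. Since the finitely many segments $\seg{pb_i}$ have no crossings with the remaining segments of $\Gamma_R'$ and meet those segments (if at all) only at shared endpoints, there is a positive distance between each $\seg{pb_i}$ and every disjoint segment of $\Gamma_R'$. Choosing $\delta_R$ small enough that $D_{\delta_R}$ is disjoint from all of $\Gamma_R'$ except near $p$, every new red segment from a point of $S\subset D_{\delta_R}$ to $b_i$ lies within an arbitrarily narrow tube around the old segment $\seg{pb_i}$; hence it stays clear of all segments of $\Gamma_R'$ that $\seg{pb_i}$ was clear of. The only remaining concern is interaction between two new segments going to $b_i$ and $b_j$: because they leave $D_{\delta_R}$ in the correct cyclic order and their targets $b_i,b_j$ are the old (non-crossing) bend-points, they fan out monotonically and do not cross once the tube is thin enough. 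I would formalize ``thin enough'' by letting $\delta_R$ be the minimum over the finitely many relevant pairwise distances.

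The main obstacle I expect is the bookkeeping at $D_{\delta_R}$ itself, that is, ensuring that the convex drawing of $S$ can be embedded so that its exclusive-edge stubs exit in the same rotation as $\sigma_v$ while staying inside the disk. This is where Lemma~\ref{le:planar-tree-convex-position} does the heavy lifting: it certifies that no two exclusive-edge endpoints of $S$ alternate along the bounding curve, so a consistent radial assignment of the vertices of $S$ to the directions of $b_1,\dots,b_k$ exists. Everything outside the disk is then a routine application of the perturbation bound $\delta_R$, and the statement for the blue graph follows by the stated symmetry.
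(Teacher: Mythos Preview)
Your core argument---the compactness/perturbation step in the second paragraph---is correct and matches the paper's approach exactly. The paper in fact gives no formal proof of this proposition at all; it simply remarks in the preceding sentence that ``by continuity, $v$ can be moved around slightly in $\Gamma_R$ while maintaining a plane drawing for the red graph,'' and declares the proposition a consequence. Your tube/distance formalization is a reasonable way to spell that out.

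However, you are proving considerably more than the proposition asks, and in doing so you have blurred the division of labor that the paper sets up. The proposition concerns \emph{only} crossings between the new red segments $\seg{q_ib_i}$ and segments \emph{already present in} $\Gamma_R'$---that is, the old red and black segments that survived the deletion of $p$ and the $\seg{pb_i}$. It says nothing about (i) crossings among the new segments $\seg{q_ib_i}$ and $\seg{q_jb_j}$ themselves, or (ii) the internal drawing of $S$. Item~(i) is precisely the content of Proposition~\ref{prop:tree-tree-expansion-local}, which has its own separate argument based on the monotonicity of angles; item~(ii), and the invocation of Lemma~\ref{le:planar-tree-convex-position} and the order $\varrho_S$, belongs to Lemma~\ref{lem:tree-tree-expansion}, where the pieces are assembled. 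Your first and third paragraphs, and the second half of your second paragraph (``The only remaining concern is interaction between two new segments\ldots''), are therefore out of place here. Strip those out and what remains is a clean one-paragraph continuity argument, which is all this proposition needs.
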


The following proposition deals with crossings between red edges incident to $S$.

\begin{proposition}
  \label{prop:tree-tree-expansion-local}
  There exists an $\epsilon_R>0$ with the following property. Let
  $q_1,\dots,q_k$ be any $k$ (not necessarily distinct) points in this
  clockwise order on the upper semicircle of $D_{\epsilon_R}$. Then the
  segments $\seg{q_1b_1},\dots,\seg{q_kb_k}$ do not intersect
  except at common endpoints.
\end{proposition}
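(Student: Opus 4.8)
The plan is to place $p$ at the origin and exploit the structure provided by Lemma~\ref{lem:tree-tree-se}: the red edges leave $v$ within angle $\epsilon$ of the positive $y$-direction, so each bend-point $b_i$ lies strictly above $p$, at distance $|\seg{pb_i}|>0$ and along a direction $\theta_i\in[\pi/2-\epsilon,\pi/2+\epsilon]$; since $\sigma_v$ is the clockwise sweep starting at the negative $x$-axis, the indices are ordered by decreasing angle, $\theta_1>\theta_2>\dots>\theta_k$ (the directions are pairwise distinct, as distinct edges leave a vertex in distinct directions). Writing $q_i=\epsilon_R u_i$ with $u_i=(\cos\phi_i,\sin\phi_i)$, placement in clockwise order on the \emph{upper} semicircle means $\phi_1\ge\phi_2\ge\dots\ge\phi_k$ with every $\phi_i\in[0,\pi]$. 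It suffices to show that for each pair $i<j$ the segments $\seg{q_ib_i}$ and $\seg{q_jb_j}$ do not cross, and I would establish this by proving that $q_j$ and $b_j$ lie strictly on the same side of the line through $q_i$ and $b_i$: a segment whose endpoints lie strictly on one side of that line cannot meet $\seg{q_ib_i}$.

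For the point $b_j$ this is immediate in the limit $\epsilon_R\to 0$, where the line through $q_i,b_i$ tends to the line through $p$ and $b_i$: the orientation of the triple $(p,b_i,b_j)$ has the sign of $\sin(\theta_j-\theta_i)$, which is negative since $\theta_j<\theta_i$ and both lie near $\pi/2$, so $b_j$ sits on the clockwise side, and by continuity the same holds for $\epsilon_R$ small. The delicate point, and the main obstacle, is $q_j$: in the limit $q_j\to p$ lies \emph{on} the line through $p$ and $b_i$, so the orientation of $(q_i,b_i,q_j)$ is degenerate and I must expand to first order in $\epsilon_R$. The leading term is proportional to $\sin(\phi_j-\theta_i)-\sin(\phi_i-\theta_i)$, and here the two geometric facts combine. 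Because $\theta_i$ is close to $\pi/2$ and $\phi_i,\phi_j\in[0,\pi]$, the arguments $\phi_i-\theta_i$ and $\phi_j-\theta_i$ lie in a range where $\sin$ is increasing, so $\phi_j\le\phi_i$ forces the difference to be nonpositive; hence $q_j$ lies on the same (clockwise) side as $b_j$. This is precisely where the placement of the $q_i$ on the upper semicircle and the near-vertical directions of the red edges are needed: without the near-vertical condition of Lemma~\ref{lem:tree-tree-se} the sine could fail to be monotone on the relevant interval, so I would choose the parameter $\epsilon$ of that lemma small enough to keep these arguments inside the monotone range.

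To finish, I would note that the threshold on $\epsilon_R$ extracted from each pair is strictly positive whenever $\phi_i>\phi_j$ (the leading coefficient is then nonzero), and since there are only $\binom{k}{2}$ pairs I take $\epsilon_R$ to be the minimum of these thresholds, additionally requiring $\epsilon_R$ to be smaller than the distance from $p$ to every $b_i$ so that $D_{\epsilon_R}$ contains no bend-point. The one remaining case is $\phi_i=\phi_j$, i.e.\ $q_i=q_j$: then the two segments share this endpoint, and for $\epsilon_R$ small the directions from $q_i$ to $b_i$ and to $b_j$ are distinct (being close to the distinct directions $\theta_i\neq\theta_j$ seen from $p$), so the segments meet only at the shared point. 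In all cases the segments $\seg{q_1b_1},\dots,\seg{q_kb_k}$ intersect only at common endpoints, as claimed.
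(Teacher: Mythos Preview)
Your approach has a genuine gap. The key claim---that $q_j$ and $b_j$ always lie on the same side of the line through $q_i$ and $b_i$---is false in general, and your proposed fix of shrinking the parameter $\epsilon$ of Lemma~\ref{lem:tree-tree-se} does not repair it. Take $\theta_i=\pi/2+\delta$ for some small $\delta>0$ (allowed, since $\theta_i\in[\pi/2-\epsilon,\pi/2+\epsilon]$), and choose $\phi_j=0$, $\phi_i=\delta$. Then $\phi_j-\theta_i=-\pi/2-\delta$ lies \emph{outside} the monotone range of $\sin$, and your leading term evaluates to $\sin(-\pi/2-\delta)-\sin(-\pi/2)=1-\cos\delta>0$, the wrong sign. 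A direct computation confirms that $q_j$ and $b_j$ then lie on \emph{opposite} sides of the line $q_ib_i$. (The segments still do not cross, because $q_i$ and $b_i$ happen to lie on the same side of the line $q_jb_j$; but that is not what you argued.) Since the proposition lets the $\phi$'s range over all of $[0,\pi]$, no positive choice of $\epsilon$ keeps every argument $\phi-\theta_i$ inside $[-\pi/2,\pi/2]$.

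There is a second, related issue with the quantifiers. The proposition asks for a single $\epsilon_R$ that works for \emph{every} admissible choice of $q_1,\dots,q_k$; your threshold is extracted pair by pair from a first-order coefficient that depends on $\phi_i,\phi_j$ and can be arbitrarily small (it vanishes when $\phi_i+\phi_j=2\theta_i\pm\pi$), so taking a minimum over the $\binom{k}{2}$ pairs does not yield a uniform bound. The paper sidesteps both problems with a different idea: for each $b_i$, the set of directions from $b_i$ that hit $D_\epsilon$ is an interval $I_i(\epsilon)$; since the $I_i(0)$ are distinct points, by continuity there is one $\epsilon_R$ for which the $I_i(\epsilon_R)$ are pairwise disjoint. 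This immediately makes the angles of $\seg{q_1b_1},\dots,\seg{q_kb_k}$ strictly decreasing \emph{regardless of where the $q_i$ sit in $D_{\epsilon_R}$}, and the non-crossing follows.
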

\begin{proof}
The angles of $\seg{pb_1},\dots,\seg{pb_k}$ are distinct and strictly decreasing, by Lemma~\ref{lem:tree-tree-se} and by the way $e_1,\dots,e_k$ are labeled. We claim that $\epsilon_R$ can be chosen sufficiently small so that the angles of $\seg{q_1b_1},\dots,\seg{q_kb_k}$ are also distinct and strictly decreasing. For a certain $\epsilon$, let $I_i(\epsilon)$ be the interval of all angles $\alpha$
  such that the ray with angle $\alpha$ from $b_i$ intersects
  $D_{\epsilon}$. Since the angles of $\seg{pb_1},\dots,\seg{pb_k}$ are distinct, it follows that the intervals $I_1(0),\dots,I_k(0)$ are disjoint. By continuity, there
  exists an $\epsilon_R>0$ for which $I_1(\epsilon_R),\dots,I_k(\epsilon_R)$ are also disjoint, and the claim follows for this $\epsilon_R$. Finally, two segments $\seg{q_ib_i}$ and $\seg{q_jb_j}$ with $i<j$ and $q_i\neq
  q_j$ can intersect only if the angle of $\seg{q_ib_i}$ is smaller than
  the angle of $\seg{q_jb_j}$, which does not happen by the claim.
\end{proof}

We get the following main lemma.

\begin{lemma}
  \label{lem:tree-tree-expansion}
  There exists an $\epsilon>0$ with the following property. We can
  expand $S$ to obtain a simultaneous embedding $\Gamma^*$ from
  $\Gamma'$ by drawing the vertices of $S$ on the boundary of
  $D_\epsilon$, the edges of $S$ as straight-line segments, and
  by connecting $S$ to $b_1,\dots,b_\ell$ with straight-line segments.
\end{lemma}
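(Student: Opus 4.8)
The plan is to combine the two propositions just established, together with their blue-graph analogues, into a single choice of $\epsilon$ that works simultaneously for both colors. First I would set $\epsilon := \min\{\delta_R,\epsilon_R,\delta_B,\epsilon_B\}$, where $\delta_R,\epsilon_R$ come from Propositions~\ref{prop:tree-tree-expansion-global} and~\ref{prop:tree-tree-expansion-local} and $\delta_B,\epsilon_B$ are the symmetric constants for the blue graph; shrinking the radius only preserves the stated properties, so each proposition still applies at radius $\epsilon$. Next I would place the vertices of $S$: by Lemma~\ref{le:planar-tree-convex-position}, putting the vertices of $S$ in the order $\varrho_S$ on a strictly-convex curve yields a plane straight-line drawing of $S$. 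I would realize this curve as an arc of the boundary of $D_\epsilon$ lying, say, on the upper semicircle, so that the vertices appear in the order $\varrho_S$; since a strictly-convex arc of the circle is itself strictly convex, the edges of $S$ drawn as straight-line segments remain interior to $D_\epsilon$ and mutually non-crossing.

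The key point is that the circular order $\varrho_S$ in which the vertices of $S$ meet their exclusive edges matches the angular order of the bend-points $b_1,\dots,b_\ell$. Recall $\sigma_v=(e_1,\dots,e_\ell)$ with $e_1,\dots,e_k$ red and $e_{k+1},\dots,e_\ell$ blue, and that $r(v)=e_1$, $b(v)=e_{k+1}$ by Lemma~\ref{lem:tree-tree-se}. By construction of $\varrho_S$ in Step~1, the exclusive edges incident to a single vertex of $S$ are consecutive in the order along $\gamma$, and $\varrho_S$ begins at the endvertex of $r(S)$, i.e.\ of $r(v)=e_1$. Hence each edge $e_i$ should be attached to the vertex of $S$ prescribed by $\varrho_S$, and that assignment is monotone in $i$ on the red side (indices $1,\dots,k$) and on the blue side (indices $k+1,\dots,\ell$) separately. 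I would therefore choose, for each $i$, the point $q_i$ on $\partial D_\epsilon$ to be the vertex of $S$ carrying $e_i$; the red points $q_1,\dots,q_k$ then appear in clockwise order on the upper semicircle and the blue points $q_{k+1},\dots,q_\ell$ in the corresponding clockwise order, so Proposition~\ref{prop:tree-tree-expansion-local} applies to each color.

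I would then assemble the drawing $\Gamma^*$: delete $p$ and the segments $\seg{pb_i}$ to obtain $\Gamma'$, draw $S$ inside $D_\epsilon$ as above, and connect each $b_i$ to its assigned vertex $q_i$ by a straight segment, so that $e_i$ becomes the two-segment curve from $q_i$ through $b_i$ to its far endpoint. No-crossing is verified color by color: Proposition~\ref{prop:tree-tree-expansion-global} (at radius $\epsilon\le\delta_R$) guarantees the new red segments $\seg{q_ib_i}$ meet no previously drawn red or black segment outside $D_\epsilon$, Proposition~\ref{prop:tree-tree-expansion-local} (at radius $\epsilon\le\epsilon_R$) guarantees they do not cross one another, and the convex-position argument handles the edges internal to $S$; the blue side is symmetric, and red and blue may freely cross. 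Finally I would confirm that the restriction to $R$ (to $B$) is plane and realizes the combinatorial {\sc Sefe} $\E$: the rotation at each expanded vertex is exactly $\sigma_v$ redistributed onto the vertices of $S$ in the order $\varrho_S$, which is the order fixed in Step~1, so $\Gamma^*$ is equivalent to $\E$. The main obstacle I anticipate is the bookkeeping in this last matching step—verifying that the clockwise order of the $q_i$ on the semicircle agrees simultaneously with $\sigma_v$, with $\varrho_S$, and with the strictly decreasing angle order required by Proposition~\ref{prop:tree-tree-expansion-local}—rather than any new geometric difficulty, since all the quantitative work is already encapsulated in the two propositions.
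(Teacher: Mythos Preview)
Your approach is precisely the paper's: take $\epsilon=\min\{\delta_R,\epsilon_R,\delta_B,\epsilon_B\}$, place the vertices of $S$ along a convex arc of $\partial D_\epsilon$ in the order $\varrho_S$, invoke Lemma~\ref{le:planar-tree-convex-position} for the interior of $S$, and use the two propositions (and their blue analogues) for the new exterior segments.

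One point needs tightening, however: the arc cannot be an arbitrary arc of the upper semicircle. The blue analogue of Proposition~\ref{prop:tree-tree-expansion-local} requires the attachment points $q_{k+1},\dots,q_\ell$ to lie on the \emph{right} semicircle of $D_{\epsilon_B}$, because by Lemma~\ref{lem:tree-tree-se} the blue bend-points $b_{k+1},\dots,b_\ell$ lie in the positive $x$-direction from $p$. If some $q_j$ sits in the upper-left quadrant, the segment $\seg{q_jb_j}$ passes back through the interior of $D_\epsilon$ and may cross a black edge of $S$, and the angle-monotonicity argument underlying the proposition no longer applies. The paper therefore places the vertices of $S$ on the \emph{upper-right quadrant} of $\partial D_\epsilon$, which lies in both the upper and the right semicircle, so that the red and the blue instances of Proposition~\ref{prop:tree-tree-expansion-local} are simultaneously applicable. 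With that single adjustment your plan goes through verbatim; the bookkeeping you flag at the end is exactly the one-line observation the paper makes, namely that the sequence of $S$-endvertices of $e_1,\dots,e_k$ (and analogously of $e_{k+1},\dots,e_\ell$) is a subsequence of $\varrho_S$ and hence already in the required clockwise order on the arc.
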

\begin{proof}
  Let $\delta_R$, $\delta_B$, $\epsilon_R$, and $\epsilon_B$ be the
  constants given by Propositions~\ref{prop:tree-tree-expansion-global}
  and~\ref{prop:tree-tree-expansion-local} and their
  analogous formulations for $B$. Let $\epsilon:=\min\{\delta_R,\delta_B,\epsilon_R,\epsilon_B\}$. Place the vertices of $S$ as distinct points on the boundary of the upper-right quadrant of $D_\epsilon$ in the order $\varrho_S$. By Lemma~\ref{le:planar-tree-convex-position}, this placement determines a straight-line plane drawing of $S$. Draw straight-line segments from the vertices of $S$ to $b_1,\dots,b_\ell$, thus completing the drawing of the exclusive edges incident to $S$. We prove that the red segments incident to $S$ do not cross any red or black edge; the proof for the blue segments is analogous. By Proposition~\ref{prop:tree-tree-expansion-global}, the red segments incident to $S$ do not cross the red and black segments not incident to $S$. Also, they do not cross the edges of $S$, which are internal to $D_\epsilon$. Further, Proposition~\ref{prop:tree-tree-expansion-local} ensures that these segments do not cross each other. Namely, the linear order of the vertices of $S$ defined by the sequence of red edges $e_1,\dots,e_k$ is a subsequence of $\varrho_S$, given that the embedding $\E_{R'}$ of $R'$ is the one inherited from $\E_{R}$, given that Lemma~\ref{lem:tree-tree-se} produces a drawing of $R'$ respecting $\E_{R'}$ and in which $e_1=r(v)$, and given that the endvertex of $r(S)$ in $S$ is the first vertex of $\varrho_S$. 
\remove{
Order the
  edges of $S$ in such a way that uncontracting them in this order
  produces a leaf (in the current graph) with every uncontraction.

  Begin by placing the vertex corresponding to $S$ on the boundary of
  upper-right quadrant of $D_\epsilon$. Each uncontraction is performed
  as follows. Let $v$ be the vertex corresponding to the next edge we
  wish to uncontract. Uncontracting $v$ yields two vertices, call them
  $u$ and $w$, such that $u$ is incident to some prefix (with respect to
  the total order $e_1,\dots,e_\ell$) of the red edges incident to $v$
  and some prefix of the blue edges incident to $v$, and $w$ is incident
  to the remainder. This is due to the choice of $r(v)$ and $b(v)$ and
  the fact that Lemma~\ref{lem:tree-tree-se} ensures that $r(v)$ and
  $b(v)$ are the first in their respective orders.

  If $w$ is a leaf in the current graph then we place $u$ at the
  original position of $v$ and $w$ directly right of $u$ on the boundary
  of the upper-right quadrant of $D_\epsilon$. Otherwise, $u$ is a leaf
  and we place $w$ at the original position of $v$ and $u$ to its left.
  Uncontracting the edges in this order yields a straight-line drawing
  of $S$ with its vertices on the boundary of the upper-right quadrant
  of $D_\epsilon$. This procedure produces a plane drawing of $S$ since
  every edge is drawn between consecutive vertices along $D_\epsilon$ at
  the moment it is drawn. By
  Proposition~\ref{prop:tree-tree-expansion-global} and
  Proposition~\ref{prop:tree-tree-expansion-local} and their analogous
  formulations for $B$, we can draw the connections to
  $b_1,\dots,b_\ell$ as straight-line segments without introducing any
  forbidden intersections.
 }
\end{proof}

We are now ready to state the main theorem of this section.

\begin{theorem}
  \label{thm:tree-tree}
  Let $R$ and $B$ be two trees. There exists a {\sc Sefe} of $R$ and $B$ in which every exclusive edge is a polygonal curve with one bend, every common edge is a straight-line segment, and every two exclusive edges cross at most four times.
\end{theorem}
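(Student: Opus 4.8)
The plan is to assemble the construction of Steps~1--5 into a single {\sc Sefe} and then read off the three claimed properties. First I would run Step~1 to obtain the combinatorial {\sc Sefe} $\E$ in which the black edges are consecutive at every common vertex, Step~2 to contract each component $S$ of $C$ and produce the trees $R'$ and $B'$ with the inherited embeddings $\E_{R'}$ and $\E_{B'}$, Step~3 to compute the 1PBEs of $R'$ and of $B'$ guaranteed by Lemma~\ref{lem:be}, and Step~4 to obtain, via Lemma~\ref{lem:tree-tree-se}, the simultaneous embedding of $R'$ and $B'$ with one bend per edge and the prescribed angular and rotation properties. Finally I would apply Step~5, iterating Lemma~\ref{lem:tree-tree-expansion} to expand the components of $C$ one at a time. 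Each expansion is confined to a disk $D_\epsilon$ around the corresponding vertex, and Propositions~\ref{prop:tree-tree-expansion-global} and~\ref{prop:tree-tree-expansion-local} already account for every segment ``already present'' in the current drawing, so successive expansions do not interfere; the final drawing $\Gamma^*$ is thus a simultaneous embedding of $R$ and $B$.

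Next I would argue that $\Gamma^*$ is in fact a {\sc Sefe} and not merely a simultaneous embedding. Every black edge of $C$ lies inside some component $S$ and, by Lemmas~\ref{le:planar-tree-convex-position} and~\ref{lem:tree-tree-expansion}, is drawn as a single straight-line segment between two vertices placed on the boundary of $D_\epsilon$. Since this segment is drawn once and belongs to both the red and the blue restriction, the common edges coincide in $\Gamma^*\vert_R$ and $\Gamma^*\vert_B$; combined with the planarity of each restriction, this shows that $\Gamma^*$ is a {\sc Sefe}. The same observation establishes the property that every common edge is a straight-line segment.

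For the bend count I would track an exclusive edge through the construction. In Step~4 every edge of $R'$ (of $B'$) is drawn as a polygonal curve with a single bend at its bend-point $b_i$. An expansion at an endpoint that is a contracted component merely replaces the segment joining that endpoint to $b_i$ by a segment from a vertex of $S$ to $b_i$, leaving the bend at $b_i$ untouched, while an exclusive endpoint is never expanded. Hence after all expansions each exclusive edge is still a polygonal curve with at most one bend.

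The crossing bound is then immediate and is essentially a corollary of the bend bound: an exclusive edge with one bend is the union of two straight-line segments, two segments meet in at most one point, and therefore a red exclusive edge and a blue exclusive edge cross in at most $2\cdot 2 = 4$ points; red--red and blue--blue pairs do not cross at all since each restriction is plane. I expect the main content of the theorem to lie not in this final count but in the already-established Lemmas~\ref{lem:tree-tree-se} and~\ref{lem:tree-tree-expansion}, namely the one-bend simultaneous embedding of $R'$ and $B'$ with the correct rotation at each common vertex and the verification that the local expansions introduce no monochromatic crossings. The only subtlety I would be careful about is general position: perturbing the coordinates assigned in Step~4 if necessary so that no two of the resulting segments are collinear, which is what guarantees the ``at most one intersection'' per segment pair underlying the count of four.
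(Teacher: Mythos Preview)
Your proposal is correct and follows essentially the same approach as the paper: invoke Lemma~\ref{lem:tree-tree-se} for the one-bend simultaneous embedding of $R'$ and $B'$, then Lemma~\ref{lem:tree-tree-expansion} repeatedly to expand the components of $C$, and finally derive the crossing bound from the $2\times 2$ segment count. The paper's own proof is terser and omits your explicit tracking of why expansion preserves the bend count and your general-position caveat, but the substance is identical.
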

\begin{proof}
  By Lemma~\ref{lem:tree-tree-se}, $R'$ and $B'$ admit a simultaneous embedding with one bend per
  edge. By repeated applications of Lemma~\ref{lem:tree-tree-expansion},
  the simultaneous embedding of $R'$ and $B'$ can be turned into a {\sc Sefe} of $R$ and $B$ in which every exclusive edge has one bend and every common edge is a straight-line segment. Finally, any two exclusive edges cross at most four times, given that each of them   consists of two straight-line segments.
\end{proof}


\section{A Planar Graph and a Tree} \label{se:tree-planar} 

In this section we give an algorithm which computes a {\sc Sefe} of any planar graph $R=(V_R,E_R)$ and any tree $B=(V_B,E_B)$ in which every edge of $R$ has at most six bends and every edge of $B$ has one bend. Due to a possible initial augmentation that maintains planarity, we can assume $R$ to be connected. The common graph $C$ of $R$ and $B$ is a forest, as it is a subgraph of $B$. The algorithm we give has strong similarities with the one for trees (Section~\ref{se:trees}); however, it encounters some of the complications one needs to handle when dealing with pairs of general planar graphs (Section~\ref{se:planar-planar}). Its outline is as follows. 

In Step~1 we modify $R$ and $B$ to obtain a planar graph $R'$ and a tree $B'$ with a common graph $C'$ by introducing {\em antennas}, that are edges of $C'$ replacing parts of the exclusive edges of $R$. While this costs two extra bends per edge of $R$ in the final {\sc Sefe} of $R$ and $B$, it establishes the property that, for every exclusive edge $e$ of $R'$, every endvertex of $e$ in $C'$ is incident to $e$, to an edge of $C'$, and to no other edge. 

In Step~2 we construct a combinatorial {\sc Sefe} of $R'$ and $B'$ such that at every vertex $v$ of $C'$, all the edges of $C'$ are consecutive in the circular order of the edges incident to $v$. While the construction is the same as for tree-tree pairs, it works for the general planar graph $R'$ only because of the antennas introduced in Step~1.

In order to construct a {\sc Sefe} of $R'$ and $B'$, in Steps~3,~5, and~6 we perform a {\em contraction -- simultaneous embedding -- expansion} process similar to the one in Section~\ref{se:trees}. This again relies on an independent Hamiltonian augmentation of the graphs, which is done in Step~4. The augmentation of the tree is done by Lemma~\ref{lem:be}. However, in order to augment the planar graph we need to subdivide some of its edges. Finally, we obtain a {\sc Sefe} of $R$ and $B$ by removing the antennas. Next we describe these steps in detail.


\subsubsection{Step~1: Antennas.} We replace every exclusive edge $e=(u,v)\in E_R$ such that $u,v\in V_R \cap V_B$ by a path $(u,u_e,v_e,v)$, with two new common vertices $u_e$ and $v_e$, black edges $(u,u_e),(v_e,v)$, and a red edge $(u_e,v_e)$. Analogously, we replace every exclusive edge $e=(u,v)\in E_R$ such that $u\in V_R \cap V_B$ and $v\notin V_R \cap V_B$ by a path $(u,u_e,v)$, with a new common vertex $u_e$, a common edge $(u,u_e)$, and a red edge $(u_e,v)$. See \figurename~\ref{fig:degree1}. The resulting planar graph $R'$ and tree $B'$ satisfy the following property.  

\begin{figure}[hb]
  \centering\hfil
  \subfloat[]{\label{fig:degree1-before}\includegraphics{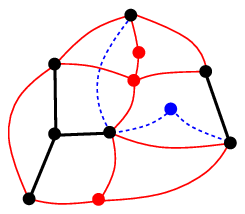}}\hfil%
  \subfloat[]{\label{fig:degree1-after}\includegraphics{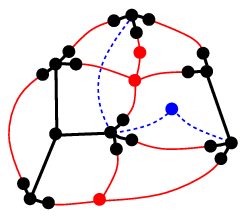}}\hfil
  \caption{(a) Planar graph $R$ and tree $B$. (b) Planar graph $R'$ and tree $B'$.} 
  \label{fig:degree1}
\end{figure}

\begin{prop} \label{pr:degree1}
For every exclusive edge $e$ of $R'$, every endvertex of $e$ in the common graph $C'$ of $R'$ and $B'$ is incident to $e$, to an edge in $C'$, and to no other edge. 
\end{prop}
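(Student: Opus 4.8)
The plan is to verify the property by a direct structural analysis of the exclusive edges of $R'$, classifying them according to how they arise from the antenna construction of Step~1. First I would observe that the exclusive (red) edges of $R'$ fall into exactly three types: (i) the antenna edges $(u_e,v_e)$ introduced when replacing an exclusive edge $e=(u,v)$ of $R$ with both $u,v\in V_R\cap V_B$; (ii) the antenna edges $(u_e,v)$ introduced when replacing an exclusive edge $e=(u,v)$ with $u\in V_R\cap V_B$ and $v\notin V_R\cap V_B$; and (iii) the original exclusive edges of $R$ whose two endvertices are both exclusive, which the construction leaves untouched. Since the property only constrains endvertices lying in $C'$, type~(iii) edges are vacuously fine, and for a type~(ii) edge only the endvertex $u_e$ needs to be checked.

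Next I would argue that no original common vertex can be an endvertex of an exclusive edge of $R'$. Let $w\in V_R\cap V_B$ be a vertex of the original common graph $C$. In $R$, the edges incident to $w$ are either common (black) or exclusive (red). The antenna construction leaves common edges untouched, but it replaces every exclusive edge incident to $w$ by a path whose first edge at $w$ is a newly introduced common edge. Hence in $R'$ the vertex $w$ is incident only to edges of $C'$, so $w$ is not an endvertex of any exclusive edge of $R'$. This establishes that the common endvertices of exclusive edges of $R'$ are exactly the new antenna vertices $u_e$ and $v_e$.

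Finally I would read off the degree structure of these new vertices directly from the construction. Each antenna vertex is created fresh for a single replaced edge $e$, so distinct antennas do not share vertices. For a type~(i) antenna, $u_e$ is incident precisely to the black edge $(u,u_e)$ of $C'$ and the red edge $(u_e,v_e)$, and symmetrically for $v_e$; for a type~(ii) antenna, $u_e$ is incident precisely to the black edge $(u,u_e)$ of $C'$ and the red edge $(u_e,v)$. In every case the common endvertex is incident to the exclusive edge $e$, to exactly one edge of $C'$, and to no other edge, which is exactly the claim.

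I do not expect a genuine obstacle here, since the statement is an immediate consequence of the way antennas are inserted. The only point requiring care is the \emph{completeness} of the case analysis: confirming that every exclusive edge of $R'$ is of one of the three listed types, and that every newly introduced common vertex has degree exactly two. Both follow because each antenna vertex is created for a unique replaced edge and because the common edges of $R$ are never subdivided, so no other incidences can appear at $u_e$ or $v_e$.
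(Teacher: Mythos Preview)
Your proposal is correct and follows exactly the intended reasoning: the paper does not give a separate proof of this property but simply states it as an immediate consequence of the antenna construction, and your case analysis makes explicit precisely those details the paper leaves to the reader.
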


We also get the following:

\begin{lemma} \label{le:degree1-replacement}
Suppose that a {\sc Sefe} $\Gamma'$ of $R'$ and $B'$ exists in which: (i) every edge of $R'$ (of $B'$) is a polygonal curve with at most $x$ bends (resp. $y$ bends); (ii) every common edge is a straight-line segment; and (iii) any two exclusive edges cross at most $z$ times. Then there exists a {\sc Sefe} $\Gamma$ of $R$ and $B$ in which: (i) every edge of $R$ (of $B$) is a polygonal curve with at most $x+2$ bends (resp. $y$ bends); (ii) every common edge is a straight-line segment; and (iii) any two exclusive edges cross at most $z$ times. 
\end{lemma}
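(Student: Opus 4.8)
The plan is to show that the antenna construction of Step~1 is fully reversible, in the sense that any {\sc Sefe} of the modified pair $(R',B')$ can be converted back into a {\sc Sefe} of the original pair $(R,B)$ at the cost of only two extra bends per affected red edge, while preserving the straight-line drawing of common edges and the crossing bound. First I would recall precisely what the antennas are: every exclusive edge $e=(u,v)\in E_R$ with both endvertices common was replaced by a path $(u,u_e,v_e,v)$ whose two outer edges $(u,u_e)$ and $(v_e,v)$ are black (common) and whose middle edge $(u_e,v_e)$ is red (exclusive); and every exclusive edge $e=(u,v)$ with exactly one common endvertex $u$ was replaced by $(u,u_e,v)$ with a black edge $(u,u_e)$ and a red edge $(u_e,v)$. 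Crucially, the new vertices $u_e,v_e$ and the antenna edges appear only in $R'$, not in $B'$, so the blue graph $B$ is literally unchanged by the construction ($B=B'$), which immediately gives the ``$y$ bends'' part of the conclusion for blue edges.

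The main step is the reconstruction of $\Gamma$ from $\Gamma'$. For each antenna-replaced edge $e$, the drawing $\Gamma'$ provides a drawing of the path that replaced $e$: a concatenation of the black segment(s) and the red polygonal curve of the middle/outer red edge. I would simply define the image of $e$ in $\Gamma$ to be this entire concatenated curve, then \emph{flatten} the subdivision vertices $u_e$ (and $v_e$) — that is, erase them as vertices but keep the curve. Since the black edges are straight-line segments by hypothesis~(ii), concatenating a red polygonal curve with at most $x$ bends with one or two straight segments introduces at most two new bends (one at each subdivision vertex being flattened), yielding a polygonal curve with at most $x+2$ bends for the reconstructed red edge $e$; red edges of $R$ that were not antenna-replaced already had at most $x$ bends and are copied verbatim. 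All black edges of $C$ (a subset of the common edges, none of which are antenna edges) remain straight-line segments, giving~(ii). This establishes conclusions~(i) and~(ii).

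The more delicate point — and the one I expect to be the main obstacle — is the crossing bound~(iii), together with verifying that the result is genuinely a {\sc Sefe} (i.e.\ that $\Gamma|_R$ and $\Gamma|_B$ remain plane). For planarity of the red graph, flattening subdivision vertices of a plane drawing of $R'$ cannot create crossings, since we are only merging collinear-in-incidence pieces of curves that were already non-crossing; $B$ is untouched and hence stays plane. For the crossing count, I would argue that flattening does not increase the number of crossings between any red--blue edge pair: a crossing between the reconstructed $e$ and a blue edge $f$ is a crossing between $f$ and one of the constituent pieces (a black segment or the middle red curve) in $\Gamma'$. Here I must be careful that the black (common) antenna edges, while black in $R'$, behave as part of the \emph{red} curve after reconstruction — so I would show that $f$ crosses a black segment of an antenna at most as the hypothesis permits. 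The cleanest route is to note that in a {\sc Sefe} the common edges $(u,u_e)$ are drawn identically in $R'$ and $B'$, and since $B=B'$ contains no antenna structure, a blue edge $f$ crossing the antenna segment in $\Gamma'$ already corresponds to a crossing that must be counted against the pair $(e,f)$; thus the total crossings between $e$ and $f$ in $\Gamma$ equal the crossings between $f$ and the red-plus-black pieces forming $e$ in $\Gamma'$, which is bounded by $z$ because no new crossings are created by the flattening operation and the antenna pieces together with the middle red edge form what was, topologically, a single curve traversing the region of the original edge $e$. I would close by invoking the formal definition of flattening from the preliminaries to make the ``at most $z$'' claim rigorous, and conclude that $\Gamma$ is the desired {\sc Sefe} of $R$ and $B$.
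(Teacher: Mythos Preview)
Your overall approach---flatten the antenna subdivision vertices and read the concatenated path as the original red edge---is exactly what the paper does. However, you have a factual error about the construction that muddles your crossing argument. You assert that the antenna vertices $u_e,v_e$ and the antenna edges $(u,u_e),(v_e,v)$ ``appear only in $R'$, not in $B'$'', so that $B=B'$. This is wrong: Step~1 explicitly makes $u_e,v_e$ \emph{common} vertices and $(u,u_e),(v_e,v)$ \emph{black} (common) edges, so they are added to both $R'$ and $B'$. The blue graph $B'$ is $B$ with some extra leaves attached, not $B$ itself.

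This does not break your bend or planarity arguments, since $B$ is a subgraph of $B'$: removing vertices and edges from a plane drawing keeps it plane, and every edge of $B$ is already an edge of $B'$ with at most $y$ bends. But it does send your crossing argument off track. You worry about a blue exclusive edge $f$ crossing an antenna black segment and try to ``count'' such a crossing against the pair $(e,f)$. In fact no such crossing can exist: in a {\sc Sefe} of $R'$ and $B'$ the antenna edge $(u,u_e)$ lies in $B'$, so the plane drawing of $B'$ forbids any crossing between $f$ and it. Consequently, the crossings between the reconstructed $e$ and $f$ in $\Gamma$ are exactly the crossings between the single exclusive red piece ($(u_e,v_e)$ or $(u_e,v)$) and $f$ in $\Gamma'$, and that is at most $z$ by hypothesis~(iii). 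This is precisely the paper's one-line argument (``since common edges are crossing-free, \dots''); with this correction your proof is complete and matches the paper's.
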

\begin{proof}
  We obtain {\sc Sefe} $\Gamma$ from $\Gamma'$ by removing all edges $(u,u_e)$ and $(v,v_e)$ from the drawing of $B$, and by interpreting all vertices $u_e$ and $v_e$ as bend-points in the drawing of $R$. First, we have that $\Gamma$ is a {\sc Sefe} of $R$ and $B$. In particular, every two edges in $B$ are also edges in $B'$ and since they do not cross in $\Gamma'$, they do not cross in $\Gamma$ either. Further, each edge in $R$ corresponds to a path in $R'$, hence no two edges in $R$ cross in $\Gamma$ as the corresponding paths do not cross in $\Gamma'$. Second, every edge in $C$ is also an edge in $C'$, hence it is a straight-line segment in $\Gamma$, as it is in $\Gamma'$. Third, every edge in $B$ is also an edge in $B'$, hence it is a polygonal curve with at most $y$ bends in $\Gamma$, as it is in $\Gamma'$. Fourth, each edge $e$ in $R$ corresponds to a path in $R'$ composed of at most two edges in $C'$, which are straight-line segments, and of one exclusive edge in $R'$, which has at most $x$ bends. Hence, $e$ has at most $x+2$ bends in $\Gamma$ (the two extra bends correspond to the points where $u_e$ and $v_e$ used to lie). Finally, any exclusive edge in $R$ or $B$ corresponds to at most two edges in $C'$ and of one exclusive edge in $R'$ or $B'$. Since common edges are crossing-free, any two exclusive edges in $R$ and $B$ cross the same number of times as the corresponding exclusive edges in $R'$ and $B'$, which is $z$ by assumption. This concludes the proof.
\end{proof}


\subsubsection{Step~2: Combinatorial Sefe.} Start with any plane drawing of $R'$. This determines the planar embeddings $\E_{R'}$ of $R'$ and $\E_{C'}$ of $C'$. The planar embedding $\E_{B'}$ of $B'$ is completed as for tree-tree pairs: For every vertex $v$ in $C'$, pick a common edge $e_v$ incident to $v$, if it exists. Then draw the exclusive vertices of $B'$ and the blue edges one by one, so that when an edge $(u,v)$ is drawn, it leaves $u$ right after $e_u$ and it enters $v$ right after $e_v$. 
In the resulting combinatorial {\sc Sefe} $\E$ of $R'$ and $B'$ we have, in clockwise order around each vertex of $C'$, either: (i) a (possibly empty) sequence of black edges followed by a (possibly empty) sequence of blue edges; or (ii) a single black edge followed by a single red edge. This is a consequence of Property~\ref{pr:degree1} and of the embedding choice for $B'$. As in Section~\ref{se:trees}, we can assume that every connected component of $C'$ is incident to at least one red and one blue edge. We choose edges $r(S)$ and $b(S)$ for every component $S$ of $C'$ and we define an ordering $\varrho_S$ of the vertices of $S$ as in Section~\ref{se:trees}.

\subsubsection{Step~3: Contractions.} Contract each component of $C'$ to a vertex in $R'$ and in $B'$, determining graphs $R''=(V''_R,E''_R)$ and $B''$, respectively. Note that $R''$ is a planar multigraph, i.e., it might have parallel edges and self-loops, while $B''$ is a tree.  In this way $R''$ inherits a planar embedding $\E_{R''}$ from $\E_{R'}$ and $B''$ inherits a planar embedding $\E_{B''}$ from $\E_{B'}$. Each vertex $v$ resulting from the contraction of a component $S$ of $C'$ is common to $R''$ and $B''$. Let $r(v)$ and $b(v)$ be the edges corresponding to $r(S)$ and $b(S)$ after the contraction.

\subsubsection{Step~4: Hamiltonian augmentations.} A Hamiltonian augmentation of $B''$ is computed by Lemma~\ref{lem:be}. A Hamiltonian augmentation of $R''$ might not exist, thus we subdivide some edges of $R''$ before performing the augmentation, as in the following. 

\begin{lemma} \label{le:graph-augmentation}
There exists a simple planar graph $R'''=(V'''_R,E'''_R)$ such that:

\begin{itemize}
\item {\sc Two-subdivision}: $R'''$ is obtained by subdividing each edge in $R''$ either zero or two times and by adding dummy vertices and edges to the resulting graph;
\item {\sc Embedding}: $R'''$ has a planar embedding ${\cal E}_{R'''}$ from which $\E_{R''}$ can be obtained by removing dummy vertices and edges and flattening subdivision vertices; 
\item {\sc Hamiltonian cycle}: $R'''$ contains a Hamiltonian cycle ${\cal C}$, which we orient counter-clockwise in $\E_{R'''}$, none of whose edges is (part of) an edge in $E''_R$;
\item {\sc Starting edge}: for every common vertex $v$ of $R''$ and $B''$, the edge of ${\cal C}$ entering $v$ comes right before $r(v)$ in the clockwise order of edges incident to $v$ in $\E_{R'''}$; and
\item {\sc Start to the left}: all the edges in $E'''_R$ that are incident to a vertex in $V''_R$ and that are part of an edge in $E''_R$ lie to the left of ${\cal C}$ in ${\cal E}_{R'''}$. 
\end{itemize}
\end{lemma}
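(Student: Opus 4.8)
The plan is to read all five requirements as a single combinatorial object: a 2PBE of $R''$ whose spine, closed into a cycle, is exactly ${\cal C}$. Orienting ${\cal C}$ counter-clockwise makes its interior the left side, so the two pages of the book become the interior (page~1) and the exterior (page~2) of ${\cal C}$. With this correspondence the \textsc{Start to the left} and \textsc{Two-subdivision} conditions are two faces of the same coin. An edge that the 2PBE routes on page~1 can be left unsubdivided, since it lies entirely in the interior and thus both of its segments at its endpoints are already to the left of ${\cal C}$. An edge routed on page~2 is subdivided twice: one subdivision vertex is placed on the spine immediately next to each endpoint, so that the two end-segments hug ${\cal C}$ on the left (interior) while only the middle segment detours into the exterior. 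This is precisely why each edge is subdivided either zero or two times, and why every segment incident to a vertex of $V''_R$ ends up to the left of ${\cal C}$.

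The substantive task is then to produce a 2PBE of $R''$ that (a)~is consistent with the prescribed embedding $\E_{R''}$ and (b)~realizes the per-vertex \textsc{Starting edge} condition. For~(a) I would first make $R''$ amenable to a 2PBE: a general plane multigraph need not admit one, so I would subdivide and augment it into a subhamiltonian plane graph, which is exactly the role of the dummy vertices and edges. Concretely, I would fix a spine order of the vertices compatible with $\E_{R''}$ (via an incremental, canonical-order-style peeling of the embedding, after biconnecting with dummy edges if needed), insert a dummy vertex between each pair of consecutive spine vertices so that the closing cycle ${\cal C}$ consists entirely of dummy edges and hence contains no edge of $E''_R$, and route each edge of $R''$ on the interior or exterior page according to the nesting of the chosen order, giving the exterior edges their two subdivisions. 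Parallel edges and self-loops of $R''$ are broken by these same subdivisions and dummy vertices, so $R'''$ is simple; and since every addition is made consistently with $\E_{R''}$, flattening the subdivision vertices and deleting the dummies recovers $\E_{R''}$, yielding the \textsc{Embedding} condition.

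For the \textsc{Starting edge} condition I would exploit the freedom in where ${\cal C}$ passes through each common vertex $v$. Since all edges of $R''$ lie to the left of ${\cal C}$, the two ${\cal C}$-edges at $v$ are consecutive in the rotation and the original edges fill one contiguous interior wedge; going clockwise from the entering ${\cal C}$-edge one meets the original edges first, so it suffices to route ${\cal C}$ through the rotation gap immediately preceding $r(v)$, whereupon the counter-clockwise orientation places the entering edge right before $r(v)$. That such a gap is available at $v$ is guaranteed by the earlier choice of $r(S)$ in Step~2, which selects $r(v)$ as a red edge not preceded by another exclusive edge of its component.

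The main obstacle is part~(a): turning $R''$ into a subhamiltonian plane graph while simultaneously respecting $\E_{R''}$, keeping the subdivision count at zero or two per edge, and leaving room for the local gap-threading at \emph{every} common vertex at once. The difficulty is that the starting-edge gaps cannot be chosen independently vertex by vertex, since a single spine order must accommodate all of them together. Establishing this embedding-respecting 2PBE with a bounded number of subdivisions is the technical heart of the lemma; once it is in place, the page-to-subdivision correspondence of the first paragraph reduces the remaining four conditions to bookkeeping.
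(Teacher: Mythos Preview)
Your framing is right: the lemma really is asking for a 2PBE whose spine closes into ${\cal C}$, with interior edges unsubdivided and exterior edges subdivided twice where they cross the spine. You also correctly pinpoint the hard part as obtaining that 2PBE while threading ${\cal C}$ through the prescribed gap at every common vertex simultaneously. But you stop short of the construction that makes this work, and the alternative you sketch (canonical-order peeling after biconnecting) runs squarely into the difficulty you yourself flag: a global spine order does not a priori let you choose the entry gap at each vertex independently. So the proposal identifies the target but leaves the technical heart open.

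The idea you are missing is to base ${\cal C}$ on a \emph{spanning tree} $T$ of $R''$ rather than on a vertex ordering. One draws a closed curve $\gamma$ that tightly encircles $T$ in $\E_{R''}$ (an Euler-tour-style contour of the tree); this curve crosses each non-tree edge exactly twice and never crosses a tree edge, which is precisely the zero-or-two dichotomy you wanted. Crucially, because $\gamma$ merely passes \emph{near} each vertex $v$ (it wraps around $T$, which reaches $v$), one can afterwards take a small local detour of $\gamma$ through $v$, entering in whichever gap of the rotation one likes---in particular, right before $r(v)$. These detours are genuinely independent of one another, so the \textsc{Starting edge} condition at all common vertices is met at once, dissolving the obstacle you identified. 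Dummy vertices inserted along $\gamma$ separate consecutive subdivision points and anchor the detours, which ensures simplicity and that ${\cal C}$ uses no piece of an $E''_R$-edge. Your proposal has the destination right but not the vehicle; the spanning-tree contour is the step that turns the plan into a proof.
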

\begin{proof}
Let $T$ be a spanning tree of $R''$, which exists since $R''$ is connected. Draw a simple closed curve $\gamma$ in $\E_{R''}$ containing $T$ in its interior and sufficiently close to $T$ so that it crosses every edge in $E''_R$ not in $T$ twice. Insert subdivision vertices for the edges in $E''_R$ not in $T$ at these crossings; also, insert a dummy vertex on $\gamma$ between every two consecutive subdivision vertices of the same edge from $R''$. Orient $\gamma$ counter-clockwise. See \figurename~\ref{fig:graph-augmentation}.a. 

\begin{figure}[tb]
\begin{center}
\begin{tabular}{c c c}
\mbox{\includegraphics{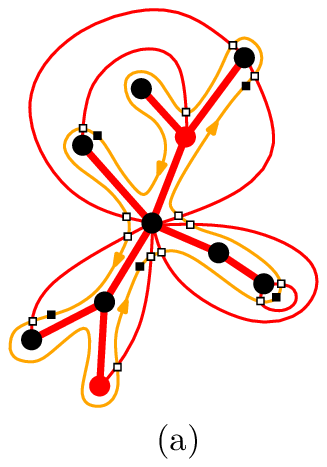}} \hspace{1mm} &
\mbox{\includegraphics{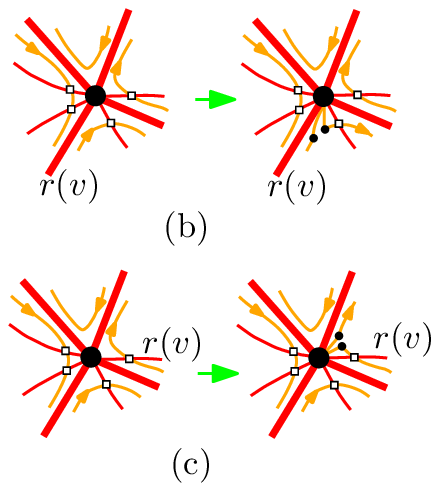}} \hspace{1mm} &
\mbox{\includegraphics{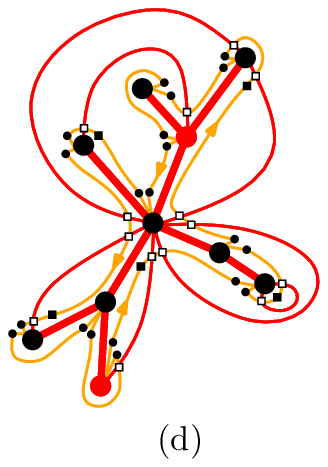}}
\end{tabular}
\caption{Illustration for Lemma~\ref{le:graph-augmentation}. (a) Graph $R''$ with its planar embedding $\E_{R''}$ has red and black disks as vertices and red curves as edges. Curve $\gamma$ is orange. Subdivision vertices for the edges of $R''$ are white squares, while dummy vertices are black squares. (b)--(c) Modifying $\gamma$ so that it passes through a vertex $v$. (d) Graph $R'''$ with its planar embedding $\E_{R'''}$. Both the red and the orange curves represent edges of $R'''$.} 
\label{fig:graph-augmentation}
\end{center}
\end{figure}

For each vertex $v$ that is not a common vertex of $R''$ and $B''$, define $r(v)$ to be an arbitrary edge incident to $v$. We now modify $\gamma$ in a small neighborhood of each vertex $v$ of $R''$, so that $\gamma$ passes through $v$. If $r(v)$ is in $T$, as in \figurename~\ref{fig:graph-augmentation}.b, then while traversing $\gamma$ counter-clockwise stop at a point in which $\gamma$ follows $r(v)$ towards $v$; insert a dummy vertex at that point, then let $\gamma$ take a detour from the dummy vertex to $v$ and then back to its previous route, where another dummy vertex is inserted. If $r(v)$ is not in $T$, as in \figurename~\ref{fig:graph-augmentation}.c, then while traversing $\gamma$ counter-clockwise stop right after the crossing between $\gamma$ and $r(v)$ that is ``closer'' to $v$; insert a dummy vertex on $\gamma$ at that point, then let $\gamma$ take a detour from the dummy vertex to $v$ and then back to its previous route, where another dummy vertex is inserted. Finally, we consider $\gamma$ as a cycle, that is, each curve that is part of $\gamma$ and that connects two consecutive vertices on $\gamma$ is an edge. Denote by $R'''$ the resulting graph and by ${\cal E}_{R'''}$ its planar embedding; see \figurename~\ref{fig:graph-augmentation}.d.

It is easy to verify that $R'''$ and ${\cal E}_{R'''}$ satisfy all the required properties. In particular, the Hamiltonian cycle $\cal C$ required by the statement is the cycle corresponding to $\gamma$: By construction, $\cal C$ passes through every vertex of $R'''$ and it does so right before $r(v)$ in the clockwise order around $v$. Also, every edge of $R''$ has been subdivided twice (if it is not in $T$) or never (if it is in $T$). Further, all the edges in $T$ lie to the left of $\gamma$ and hence of $\cal C$, while all the edges in $R''$ not in $T$ start to the left of $\cal C$, move to its right, and then end again to its left; this implies properties {\sc Hamiltonian cycle} and {\sc Start to the left}. Finally, $R'''$ is simple, due to the introduction of dummy vertices along $\gamma$. 
\end{proof}

\subsubsection{Step~5: Simultaneous embedding.} Ideally, in order to construct a simultaneous embedding of $R''$ and $B''$, we would like to use known algorithms that construct simultaneous embeddings with two bends per edge of every two planar graphs~\cite{ddlw-ccdpg-05,gl-seogpc-07,k-setbepa-06}. However, the existence of self-loops in $R''$ prevents us from doing that. In the following lemma we show how to modify those algorithms to deal with non-simple graphs. \figurename~\ref{fig:planar-tree-se} shows an example of the resulting drawing.



\begin{figure}[tb]
  \centering
  \includegraphics{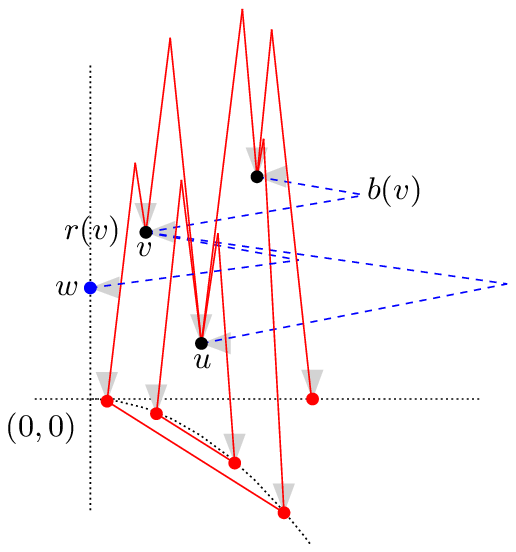}
  \caption{A simultaneous embedding of a planar graph $R''$ and a tree $B''$. The gray cones indicate the angles within $[-\epsilon,+\epsilon]$ of the positive $x$- and $y$-directions. The selfloop at $u$ in $R'$, represented as a path of length three in $R''$, crosses the edge $(v,w)$ eight times. Some angles in the drawing were modified slightly to reduce the height of the figure.}
  \label{fig:planar-tree-se}
\end{figure}

\begin{lemma}
\label{lem:planar-tree-se}
For every $\epsilon>0$, $R''$ and $B''$ admit a simultaneous embedding in which:
\begin{itemize}
\item every edge of $R''$ (of $B''$) is a polygonal curve with at most four bends (resp. with one bend);
\item every two edges cross at most eight times (counting an adjacency as one crossing);
\item all edges of $E_{R''}$ ($E_{B''}$) incident to a vertex $v$ in $V_R''$ (resp.\ $V_B''$) leave $v$ within an angle of $[-\epsilon;+\epsilon]$ with respect to the positive $y$-direction (resp.\ $x$-direction);
\item the drawing restricted to $R''$ (to $B''$) is equivalent to $\E_{R''}$ (resp. to $\E_{B''}$); and
\item for every common vertex $v$, the first red (blue) edge in $\sigma_v$ is $r(v)$ (resp. $b(v)$).
\end{itemize}
\end{lemma}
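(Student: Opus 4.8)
\textit{Proof plan.} The plan is to reproduce the grid construction of Lemma~\ref{lem:tree-tree-se}, using the Hamiltonian cycle $\mathcal{C}$ of the subdivided graph $R'''$ from Lemma~\ref{le:graph-augmentation} in place of the 1PBE of the red tree. First I would fix the coordinates exactly as in the tree case: the vertices of $R'''$ (real, subdivision, and dummy) receive $x$-coordinates according to the cyclic order in which $\mathcal{C}$ visits them, linearized by cutting $\mathcal{C}$ at an arbitrary point, and the vertices of $B''$ receive $y$-coordinates according to the spine order of the 1PBE obtained from Lemma~\ref{lem:be}; every common vertex thereby gets a consistent pair of coordinates, and any remaining coordinate is set to $0$. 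The blue tree $B''$ is then drawn verbatim as in Lemma~\ref{lem:tree-tree-se}: each blue edge becomes a $y$-monotone curve with a single bend that leaves both endpoints within angle $\epsilon$ of the positive $x$-direction, the restriction to $B''$ is equivalent to $\E_{B''}$, and $b(v)$ is first in $\sigma_v$ at every common vertex. The real work is the red side.

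For the red graph I would draw $R'''$ as a two-page book embedding with respect to $\mathcal{C}$: since $\mathcal{C}$ is oriented counter-clockwise, the edges of $R'''$ lying to its left are routed in the upper half-plane and those lying to its right in the lower half-plane. Upper-page edges incident to a real vertex are drawn as in the tree case, namely with a near-vertical stub leaving the vertex within angle $\epsilon$ of the positive $y$-direction; the lower-page parts are routed below the spine. By the {\sc Start to the left} property every red edge-part incident to a vertex of $V''_R$ sits in the upper page, so in the final drawing every edge of $E_{R''}$ leaves each of its endpoints in $V''_R$ upward, as required; by the {\sc Starting edge} property the edge of $\mathcal{C}$ enters each common vertex $v$ right before $r(v)$, which forces $r(v)$ to be first in $\sigma_v$. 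Flattening the subdivision vertices (Lemma~\ref{le:graph-augmentation} subdivides each edge of $R''$ either zero or two times) recovers the drawing of $R''$: an unsubdivided ($T$-)edge keeps its single upper-page bend, whereas a twice-subdivided edge consists of an upper-page stub at each real endpoint joined, through its two subdivision points, by a lower-page detour, for a total of at most four bends per edge of $R''$. Because the two pages realize $\E_{R'''}$ and flattening recovers $\E_{R''}$ by the {\sc Embedding} property, the red drawing is plane and equivalent to $\E_{R''}$.

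It then remains to bound the crossings between one red and one blue edge. Here I would use that each blue edge is a $y$-monotone two-segment curve, near-horizontal at its endpoints, while each red edge is an at-most-five-segment curve that is near-vertical wherever it meets a vertex of $V''_R$. Matching the at most five red segments against the two blue segments, and using that a near-vertical and a near-horizontal segment meet at most once, yields at most eight crossings per red--blue pair, with the bound attained by a self-loop of $R''$ whose three-edge representation in $R'''$ descends from the upper page into the lower page and back and can meet a single edge eight times, exactly as depicted in \figurename~\ref{fig:planar-tree-se}.

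The main obstacle I anticipate is precisely this two-page routing of the red side. Unlike the purely nested 1PBE of the tree case, a subdivided edge must descend into the lower half-plane and return while simultaneously (i) still leaving its real endpoints upward, (ii) keeping the whole drawing equivalent to $\E_{R''}$ after flattening, and (iii) respecting the budget of four bends per edge and eight crossings per red--blue pair. Verifying the bend count in particular is delicate, since each of the three $R'''$-edges of a subdivided edge and each flattened subdivision vertex is a potential bend; realizing the lower-page detour with a single segment (for instance by placing the two subdivision points below the spine) is what keeps the total at four. Handling the self-loops and parallel edges of the multigraph $R''$ within this budget, and checking that the lower-page detours of distinct edges nest without extra crossings among themselves or with the near-vertical stubs, is the part that most needs care, and is exactly what the {\sc Two-subdivision} and {\sc Start to the left} guarantees of Lemma~\ref{le:graph-augmentation} are designed to make possible.
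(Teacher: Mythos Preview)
Your plan matches the paper's proof almost step for step: assign $x$-coordinates to the vertices of $R'''$ along $\mathcal{C}$, $y$-coordinates to $B''$ along its 1PBE spine, draw $B''$ exactly as in Lemma~\ref{lem:tree-tree-se}, realize $R'''$ as a 2PBE with the left page drawn by the nested near-vertical construction, and flatten subdivision vertices to recover $R''$ with at most four bends per edge. The uses of {\sc Start to the left}, {\sc Starting edge}, and {\sc Embedding} are exactly as in the paper.

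There is, however, a real gap in two linked places. First, you write that ``any remaining coordinate is set to $0$'', which would put the subdivision vertices of $R'''$ on the $x$-axis; only later, and only as an aside about the bend budget, do you suggest ``placing the two subdivision points below the spine''. The paper commits to the second option and is specific: the subdivision vertices $V_s$ are placed on the parabola $y=-x^2$. This convex placement is not cosmetic. It is what makes the right-page edges of $R'''$ (each a straight segment between two points of $V_s$) pairwise non-crossing, since their endpoints sit on a convex curve in the same order as on the spine; and it keeps the right page disjoint from the left page, since for small enough $\epsilon$ every left-page edge avoids the convex hull of $V_s$. Without this, your ``lower-page detours nest without extra crossings'' claim has no mechanism, and with $y=0$ the right-page segments would all lie on the $x$-axis and overlap.

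Second, your crossing count does not work as written. ``A near-vertical and a near-horizontal segment meet at most once'' is just the fact that two segments meet at most once; it yields $5\times 2=10$, not $8$, and in any case only the two red segments incident to vertices of $V''_R$ are near-vertical. The paper's reduction to eight comes directly from the parabola placement: the \emph{middle} of the five red segments (the right-page edge of $R'''$) lies entirely in the open half-plane $y<0$, while both segments of every blue edge lie in $y\ge 0$; hence the middle segment contributes zero crossings and the remaining four red segments give at most $4\times 2=8$. You mention the ``below the spine'' idea only for the bend count, but it is equally the key to the crossing bound and to the planarity of the right page.
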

\begin{proof}
First, we place the vertices of $R'''$ and $B''$ in the plane. Similarly to Lemma~\ref{lem:tree-tree-se}, we assign $x$-coordinates $1,\dots,|V_{R'''}|$ to the vertices of $R'''$ according to the order in which they occur along the Hamiltonian cycle $\cal C$ defined in Lemma~\ref{le:graph-augmentation}, starting at any vertex $u^*$. Further, we assign $y$-coordinates $|V_{B''}|,\dots,1$ to the vertices of $B''$ according to the order in which they occur on the spine in the Hamiltonian augmentation of $B''$ that is computed by Lemma~\ref{lem:be}. This determines the placement of every vertex in $V_{R''}\cap V_{B''}$. We set the $x$-coordinate of every vertex of $B''$ not in $R''$ to $0$; also, we set the $y$-coordinate of every vertex of $R''$ not in $B''$ to $0$. It remains to assign $y$-coordinates to the vertices in $V_{R'''}\setminus V_{R''}$ (note that none of these vertices belongs to $B''$). A subset $V_s$ of the vertices in $V_{R'''}\setminus V_{R''}$ consists of subdivision vertices for the edges in $E_{R''}$; we assign $y$-coordinates to the vertices in $V_s$ so that they lie on the curve $y=-x^2$. We set the $y$-coordinate of every vertex in $V_{R'''}\setminus \{ V_{R''} \cup V_s\}$ to $0$.

We now draw the edges of $R''$ and $B''$. The edges of $B''$ are drawn exactly as in Lemma~\ref{lem:tree-tree-se}. We draw the edges of $R''$ as follows. Note that the Hamiltonian augmentation of $R''$ corresponds to a 2PBE of $R'''$ along a spine $\ell$, where $u^*$ can be assumed w.l.o.g. to be the first vertex along $\ell$. This 2PBE defines a partition of the edges of $R'''$ into those embedded in the half-plane ${\cal H}_l$ to the left of $\ell$ and those embedded in the half-plane ${\cal H}_r$ to the right of $\ell$. Each edge of $R'''$ in ${\cal H}_r$ connects two vertices in $V_s$, which are subdivision vertices for an edge in $E_{R''}$; thus the edges of $R'''$ in ${\cal H}_r$ form a perfect matching on $V_s$. We draw these edges as straight-line segments. In order to draw the edges of $R'''$ in ${\cal H}_l$, we define a partial order $\prec_l$ on these edges, corresponding to the way they are nested. We draw these edges one by one, in an order which is given by any linearization of $\prec_l$. The procedure to draw an edge $(u,v)$ as a $1$-bend edge is the same as in Lemma~\ref{lem:tree-tree-se}. That is, assuming w.l.o.g. that $u$ has $x$-coordinate smaller than $v$, the bend-point is the intersection point between two rays $\varrho_u$ and $\varrho_v$ emanating from $u$ and $v$ with an angle of $\pi/2-\epsilon_{uv}$ and $\pi/2+\epsilon_{uv}$, for some suitably small $0<\epsilon_{uv}<\epsilon$. 

The vertices in $V_{R'''}\setminus \{ V_{R''} \cup V_s\}$ are removed from the drawing, together with their incident edges, while the vertices in $V_s$ are interpreted as bend-points. This determines a drawing $\Gamma_{R''}$ of $R''$. We prove that $\Gamma_{R''}$ and the constructed drawing $\Gamma_{B''}$ of $B''$ constitute a simultaneous embedding of $R''$ and $B''$ as required by the lemma. Drawing $\Gamma_{B''}$ satisfies all the required properties, as in Lemma~\ref{lem:tree-tree-se}. We now argue about $\Gamma_{R''}$.

\begin{itemize}
\item {\em Angle at $v$}: All the edges of $R''$ incident to a vertex $v$ in $V_R''$ leave $v$ within an angle of $[-\epsilon;+\epsilon]$ with respect to the positive $y$-direction, by property {\sc Start to the left} in Lemma~\ref{le:graph-augmentation}, by the fact that edges of $R'''$ to the left of $\cal C$ are in ${\cal H}_l$ in the 2PBE, and by the just described construction for the edges of $R'''$ in ${\cal H}_l$. 
\item {\em Equivalence to $\E_{R''}$}: $\Gamma_{R''}$ is equivalent to $\E_{R''}$ by property {\sc Embedding} in Lemma~\ref{le:graph-augmentation} and since $\E_{R'''}$ determines the 2PBE which the construction of the drawing of $R''$ relies upon. 
\item {\em First edge in $\sigma_v$}: For every common vertex $v$, the first red edge, if any, in $\sigma_v$ is $r(v)$, by property {\sc Starting edge} in Lemma~\ref{le:graph-augmentation}. 
\item {\em Number of bends}: Each edge of $R''$ either coincides with an edge of $R'''$ or consists of three edges of $R'''$, depending on whether it is subdivided zero or two times in the proof of Lemma~\ref{le:graph-augmentation}. If an edge of $R''$ coincides with an edge of $R'''$, then it has one bend in $\Gamma_{R''}$. If it is composed of three edges of $R'''$, then it has four bends in $\Gamma_{R''}$, namely one, zero, and one bend on the three edges of $R'''$ composing it and lying in ${\cal H}_l$, ${\cal H}_r$, and ${\cal H}_l$ in the 2PBE, respectively, plus two bends corresponding to its subdivision vertices. 
\item {\em Planarity}: The vertices in $V_s$ are placed along the convex curve $y=-x^2$ in $\Gamma_{R''}$, in the same order as they occur along $\ell$. Hence, the edges lying in ${\cal H}_r$ in the 2PBE do not cross each other in $\Gamma_{R''}$. That no two edges lying in ${\cal H}_l$ in the 2PBE cross each other in $\Gamma_{R''}$ can be argued as in Lemma~\ref{lem:tree-tree-se}. Finally, any edge lying in ${\cal H}_l$ in the 2PBE has no intersection with the interior of the convex hull of the vertices in $V_s$ (provided that $\epsilon$ is small enough). Hence, it has no intersection in $\Gamma_{R''}$ with any edge lying in ${\cal H}_r$ in the 2PBE.  
\end{itemize}

It remains to argue about the number of crossings between any edges $e_r$ of $R''$ and $e_b$ of $B''$. Note that $e_b$ is composed of two straight-line segments in $\Gamma_{B''}$. If $e_r$ is also composed of two straight-line segments, then $e_r$ and $e_b$ cross at most four times. Otherwise, $e_r$ is composed of five straight-line segments, from which an upper bound of ten on the number of crossings between $e_r$ and $e_b$ directly follows. This bound is improved to eight by observing that the third segment of $e_r$ (corresponding to the edge of $R'''$ lying in ${\cal H}_r$) does not cross the two segments composing $e_b$, as the former lies in the open half-plane $y<0$, while the latter lie in the closed half-plane $y\geq 0$. 
\end{proof}

\subsubsection{Step~6: Expansion.} Next, we expand the components of $C'$ in the simultaneous embedding of $R''$ and $B''$ obtained in Lemma~\ref{lem:planar-tree-se}. This expansion is performed exactly as in Section~\ref{se:trees}. That is, the components of $C'$ are expanded one by one; when a component $S$ is expanded, its vertices are placed in the order $\varrho_S$ on the upper-right quadrant of the boundary of a suitably small disk $D_\epsilon$ centered at the vertex $S$ was contracted to. This results in a {\sc Sefe} of $R'$ and $B'$. Finally, the vertices and edges not in $R$ and $B$ are removed, in order to get a {\sc Sefe} of $R$ and $B$. We have the following.

\begin{theorem}\label{thm:planar-tree}
 Let $R$ be a planar graph and let $B$ be a tree. There exists a {\sc Sefe} of $R$ and $B$ in which every exclusive edge of $R$ is a polygonal curve with at most six bends, every exclusive edge of $B$ is a polygonal curve with one bend, every common edge is a straight-line segment, and every two exclusive edges cross at most eight times.
\end{theorem}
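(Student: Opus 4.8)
The plan is to assemble the constructions of Steps~1--6 and to track the bend and crossing counts through each transformation. Recalling that we may assume $R$ connected, I would first apply Lemma~\ref{lem:planar-tree-se} to the contracted graphs $R''$ and $B''$ for a sufficiently small $\epsilon>0$, obtaining a simultaneous embedding in which every edge of $R''$ has at most four bends, every edge of $B''$ has one bend, every two edges cross at most eight times, all edges leave each vertex within the prescribed angle cones around the positive $y$- and $x$-directions, and at every common vertex $v$ the first red and blue edges in $\sigma_v$ are $r(v)$ and $b(v)$, respectively.

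Next I would expand the components of $C'$ exactly as in Section~\ref{se:trees}, one at a time: place the vertices of each component $S$ in the order $\varrho_S$ on the upper-right quadrant of the boundary of a small disk $D_\epsilon$ centered at the point to which $S$ was contracted, draw the edges of $S$ as straight-line segments (which is plane by Lemma~\ref{le:planar-tree-convex-position}), and connect $S$ to the first bend of each incident exclusive edge by a straight segment. The hypotheses needed to run the expansion machinery---the angle condition and the fact that $r(v)$ and $b(v)$ come first in $\sigma_v$---are precisely the properties guaranteed by Lemma~\ref{lem:planar-tree-se}, so Propositions~\ref{prop:tree-tree-expansion-global} and~\ref{prop:tree-tree-expansion-local} and Lemma~\ref{lem:tree-tree-expansion} apply verbatim, with the role of the bend-point $b_i$ now played by the first bend of the $i$-th incident exclusive edge. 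Since the expansion merely replaces the segment from $v$ to that first bend by a segment from a vertex of $S$ to the same bend, it introduces no new bends; hence every exclusive edge of $R'$ still has at most four bends, every exclusive edge of $B'$ still has one bend, every edge of $C'$ is a straight-line segment, and the outcome is a {\sc Sefe} $\Gamma'$ of $R'$ and $B'$.

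Finally I would invoke Lemma~\ref{le:degree1-replacement} with $x=4$, $y=1$, and $z=8$ to remove the antennas, yielding a {\sc Sefe} $\Gamma$ of $R$ and $B$ in which every exclusive edge of $R$ has at most $4+2=6$ bends, every exclusive edge of $B$ has one bend, every common edge is a straight-line segment, and every two exclusive edges cross at most eight times, as claimed.

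The step requiring the most care is the bookkeeping in the last two transformations, specifically the preservation of the eight-crossing bound. Here the point to verify is that the expansion acts only inside the small disks centered at the contracted common vertices, which lie at $y$-coordinate at least one, and therefore leaves untouched the third segment of each five-segment red edge---the segment embedded in ${\cal H}_r$, which sits entirely in the half-plane $y<0$. Consequently the separation between that segment and the blue edges (drawn in $y\ge 0$) that improved the bound from ten to eight in Lemma~\ref{lem:planar-tree-se} still holds in $\Gamma'$, and the removal of antennas preserves the count because, by Lemma~\ref{le:degree1-replacement}, the common edges are crossing-free so each exclusive edge of $R$ or $B$ crosses another the same number of times as the corresponding exclusive edge of $R'$ or $B'$.
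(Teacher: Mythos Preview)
Your proof is correct and follows the paper's approach step for step: Lemma~\ref{lem:planar-tree-se}, then repeated expansion via Lemma~\ref{lem:tree-tree-expansion}, then removal of antennas via Lemma~\ref{le:degree1-replacement}. The only notable difference is in how you justify the eight-crossing bound after expansion: the paper observes that the bound of Lemma~\ref{lem:planar-tree-se} already counts adjacencies, and that expansion can introduce a new proper crossing only at the cost of destroying a former adjacency at a contracted vertex; you instead re-derive the bound in~$\Gamma'$ by direct segment counting (four of the five red segments against two blue segments, the middle red segment remaining in $y<0$ and hence disjoint from the perturbed blue segments in $y>0$)---both arguments are valid.
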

\begin{proof}
By Lemma~\ref{lem:planar-tree-se}, $R''$ and $B''$ admit a simultaneous embedding $\Gamma''$ in which every edge of $R''$ (of $B''$) is a polygonal curve with at most four bends (resp.\ with one bend). By repeated application of Lemma~\ref{lem:tree-tree-expansion}, $\Gamma''$ can be turned into a {\sc Sefe} $\Gamma'$ of $R'$ and $B'$ in which every exclusive edge of $R'$ (of $B'$) has at most four bends (resp.\ one bend) and every common edge is a straight-line segment. By Lemma~\ref{le:degree1-replacement}, there exists a {\sc Sefe} $\Gamma$ of $R$ and $B$ in which every exclusive edge of $R$ (of $B$) has at most six bends (resp.\ one bend) and every common edge is a straight-line segment. Concerning the number of crossings, by Lemma~\ref{lem:planar-tree-se} every two edges cross at most eight times in $\Gamma''$, also counting their adjacencies. While the expansions performed in Lemma~\ref{lem:tree-tree-expansion} in order to construct $\Gamma'$ starting from $\Gamma''$ might introduce new proper crossings for a pair of exclusive edges of $R'$ and $B'$, they only do so at the cost of removing the adjacency between the corresponding edges of $R''$ and $B''$. Hence, the maximum number of crossings per pair of edges is eight in $\Gamma'$ and, by Lemma~\ref{le:degree1-replacement}, is eight also in $\Gamma$. 
\end{proof}

\section{Two Planar Graphs} \label{se:planar-planar} 

In this section we give an algorithm to compute a {\sc Sefe} of any two planar graphs $R$ and $B$ in which every edge has at most six bends. Let $C$ be the common graph of $R$ and $B$. We assume here that a combinatorial {\sc Sefe} $\cal E$ of $R$ and $B$ is part of the input, since testing the existence of a {\sc Sefe} of two planar graphs is a problem of unknown complexity~\cite{bkr-sepg-13}. 

We assume that no exclusive vertex or edge lies in the outer face of $C$ in $\cal E$, and that $R$ and $B$ are connected. These two conditions are indeed met after the following augmentation. First, introduce a cycle $\delta^*$ in $C$ and embed it in $\cal E$ so that it surrounds the rest of $R$ and $B$. Then, introduce a red (blue) vertex inside each face $f$ of $R$ (of $B$) in $\cal E$ different from the outer face, and connect it to all the vertices incident to $f$. 

We outline our algorithm. First, $R$ and $B$ are modified into planar graphs $R'$ and $B'$ with a common graph $C'$ by introducing antennas, as in Step~1 of the algorithm in Section~\ref{se:tree-planar}; however, here the modification is performed for both graphs. Similarly to Sections~\ref{se:trees} and~\ref{se:tree-planar}, we would like to {\em contract} each component $S$ of $C'$ to a vertex, construct a {\em simultaneous embedding} of the resulting graphs, and finally {\em expand} the components of $C'$. However, $S$ is here not just a tree, but rather a planar graph containing in its internal faces other components of $C'$ (and exclusive vertices and edges of $R'$ and $B'$). Hence, the {\em contraction -- simultaneous embedding --
expansion} process does not happen just once, but rather we proceed from the outside to the inside of $C'$ iteratively, each time applying that process to draw certain subgraphs of $R'$ and $B'$, until $R'$ and $B'$ have been entirely drawn. We now describe this algorithm in more detail. 

First, we introduce antennas in $R$ and $B$, that is, we replace each exclusive edge $e=(u,v)$ of $R$ (resp. of $B$) with $u$ and $v$ in $C$ by a path $(u,u_e,v_e,v)$ such that $u_e$, $v_e$, $(u,u_e)$, and $(v_e,v)$ are in $C$, while $(u_e,v_e)$ is exclusive to $R$ (resp.\ to $B$). We also replace each exclusive edge $e=(u,v)$ of $R$ (resp.\ of $B$) with $u$ is in $C$ and $v$ not in $C$ by a path $(u,u_e,v)$ such that $u_e$ and $(u,u_e)$ are in $C$, while $(u_e,v)$ is exclusive to $R$ (resp.\ to $B$). The resulting planar graphs $R'$ and $B'$ satisfy the following property.  

\begin{prop} \label{pr:degree1-bis}
For every exclusive edge $e$, every endvertex of $e$ in the common graph $C'$ of $R'$ and $B'$ is incident to $e$, to an edge in $C'$, and to no other edge. 
\end{prop}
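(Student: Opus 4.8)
The plan is to read off the property directly from the antenna construction by a short case analysis on the possible forms of an exclusive edge of $R'$ (the argument for $B'$ being symmetric, by the analogous replacement applied to $B$). First I would observe that, after the replacements, every exclusive edge of $R'$ is of exactly one of three types: (i) an edge $(u_e,v_e)$ arising from an exclusive edge $e=(u,v)$ of $R$ with both $u,v\in C$; (ii) an edge $(u_e,v)$ arising from an exclusive edge $e=(u,v)$ of $R$ with $u\in C$ and $v\notin C$; or (iii) an unmodified exclusive edge $(u,v)$ of $R$ with both $u,v\notin C$. Only types (i) and (ii) have any endvertex in $C'$, and in each such case that endvertex is a newly created vertex $u_e$ (or $v_e$); type (iii) vacuously satisfies the claim, since neither endpoint is promoted to a common vertex by the construction.

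Next I would establish the two claims that together give the property. First, each newly created vertex $u_e$ (resp.\ $v_e$) has degree exactly two: by construction it is inserted only on the single replacement path, so it is incident to exactly one edge of $C'$ (namely $(u,u_e)$, resp.\ $(v_e,v)$), to exactly one exclusive edge (namely $(u_e,v_e)$ or $(u_e,v)$), and to nothing else. This is precisely the asserted incidence pattern. Second, I would verify that no original common vertex can be an endvertex of an exclusive edge of $R'$: any vertex $w\in C$ that was formerly an endpoint of an exclusive edge $(w,z)$ of $R$ now meets, in place of that edge, the common edge $(w,w_e)$ of the inserted path, so after the transformation $w$ is incident to no exclusive edge at all. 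Consequently the only endvertices of exclusive edges lying in $C'$ are the newly created vertices handled by the first claim, and for type (ii) the remaining endvertex $v$ stays exclusive and hence is not in $C'$.

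I expect the only point requiring care — rather than a genuine obstacle — to be the bookkeeping that the two graphs are processed independently: an edge exclusive to $B$ cannot acquire as an endpoint a vertex exclusive to $R$, and the antenna vertices created for $B$ are fresh and distinct from those created for $R$, so introducing antennas in $B$ does not disturb the incidences established for $R'$, and vice versa. Combining the two claims over both colors then yields Property~\ref{pr:degree1-bis}.
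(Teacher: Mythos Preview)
Your proposal is correct. The paper itself does not prove Property~\ref{pr:degree1-bis}: it is stated as an immediate consequence of the antenna construction, exactly parallel to Property~\ref{pr:degree1} in Section~\ref{se:tree-planar}. Your case analysis---classifying the exclusive edges of $R'$ (and symmetrically $B'$) into the three types, checking that the only endvertices in $C'$ are the freshly created antenna vertices $u_e,v_e$, and verifying their degree-two incidence pattern---is precisely the routine verification the paper leaves to the reader, and your remark about the independence of the two colors' antenna insertions is the one bookkeeping point worth mentioning.
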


We also get the following lemma, whose proof is analogous to the one of Lemma~\ref{le:degree1-replacement} and hence is omitted here.

\begin{lemma} \label{le:degree1-replacement-bis}
Suppose that a {\sc Sefe} $\Gamma'$ of $R'$ and $B'$ exists in which: (i) every edge of $R'$ (of $B'$) is a polygonal curve with at most $x$ bends (resp. $y$ bends); (ii) every common edge is a straight-line segment; and (iii) any two exclusive edges cross at most $z$ times. Then there exists a {\sc Sefe} $\Gamma$ of $R$ and $B$ in which: (i) every edge of $R$ (of $B$) is a polygonal curve with at most $x+2$ bends (resp. $y+2$ bends); (ii) every common edge is a straight-line segment; and (iii) any two exclusive edges cross at most $z$ times. 
\end{lemma}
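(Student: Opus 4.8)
The plan is to mirror the proof of Lemma~\ref{le:degree1-replacement} almost verbatim, since the only difference between the two settings is that here antennas are introduced in \emph{both} $R$ and $B$, rather than only in $R$. Concretely, I would obtain the {\sc Sefe} $\Gamma$ of $R$ and $B$ from the {\sc Sefe} $\Gamma'$ of $R'$ and $B'$ by the following local surgery: for every antenna path $(u,u_e,v_e,v)$ or $(u,u_e,v)$ that replaced an exclusive edge $e=(u,v)$ of $R$ (resp.\ of $B$), I delete the common edges $(u,u_e)$ and $(v_e,v)$ (resp.\ the single common edge $(u,u_e)$) from the drawing of $R$ (resp.\ of $B$), and reinterpret the antenna endpoints $u_e$ and $v_e$ as \emph{bend-points} of the reconstituted exclusive edge $e$ in the drawing of $R$ (resp.\ of $B$). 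Since the antennas were inserted symmetrically, this deletion-and-reinterpretation is performed independently on the red exclusive edges (using the red antennas) and on the blue exclusive edges (using the blue antennas).

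I would then verify the three claimed bounds exactly as in Lemma~\ref{le:degree1-replacement}, but now carrying the symmetry through to the blue side. First, that $\Gamma$ is a {\sc Sefe}: each edge of $R$ corresponds to a path in $R'$ and each edge of $B$ to a path in $B'$, so no two red (resp.\ blue) edges cross in $\Gamma$ because their corresponding paths do not cross in $\Gamma'$; and every common edge of $C$ is also an edge of $C'$, hence is a straight-line segment in $\Gamma$ as it was in $\Gamma'$. For the bend counts, each exclusive edge $e$ of $R$ is realized in $R'$ as a path consisting of at most two common edges (straight segments, contributing the two new bends at $u_e$ and $v_e$) plus one exclusive edge of $R'$ carrying at most $x$ bends, giving at most $x+2$ bends for $e$; by the symmetric treatment of the blue antennas, each exclusive edge of $B$ acquires at most $y+2$ bends. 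This is exactly where the statement differs from Lemma~\ref{le:degree1-replacement}, whose asymmetric bound $y$ on the blue side becomes the symmetric $y+2$ here.

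For the crossing bound I would argue that since all common edges (in particular all antenna segments) are crossing-free, any two exclusive edges of $R$ and $B$ in $\Gamma$ cross exactly as often as the unique pair of exclusive edges of $R'$ and $B'$ that carry the ``middle'' portions of the respective antenna paths, which is at most $z$ by hypothesis; collapsing the antennas back into bend-points does not create or destroy crossings because the deleted pieces were common edges and hence crossing-free.

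I do not anticipate any genuine obstacle: the argument is a routine and fully symmetric adaptation of the proof of Lemma~\ref{le:degree1-replacement}, which is precisely why the authors state that the proof is omitted. The only point requiring a moment's care is bookkeeping the correspondence between an edge of $R$ (or $B$) and its antenna path, to be sure that each original exclusive edge picks up exactly two extra bends and that the reinterpretation of $u_e,v_e$ as bends is consistent on both the red and the blue side simultaneously.
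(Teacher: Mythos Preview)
Your proposal is correct and follows exactly the approach the paper intends: the authors explicitly state that the proof is analogous to that of Lemma~\ref{le:degree1-replacement} and omit it, and your symmetric adaptation (reinterpreting antenna vertices as bend-points on both the red and blue sides, then checking bends, straightness of common edges, and crossings) is precisely that analogy carried out.
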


A combinatorial {\sc Sefe} $\E'$ of $R'$ and $B'$ is naturally derived from $\E$ by drawing the antennas as ``very small'' curves on top of the edges they partially replace. By Property~\ref{pr:degree1-bis}, in $\E'$ we have, in clockwise order around each vertex of $C'$, either: (i) a sequence of black edges; or (ii) a single black edge followed by a single red edge; or (iii) a single black edge followed by a single blue edge. Let $\E_{C'}$ be the restriction of $\E'$ to $C'$. 


We now construct a {\sc Sefe} of $R'$ and $B'$. We start by representing the cycle $\delta^*$ of $C'$ as a strictly-convex polygon $\Delta^*$. Next, assume that a {\sc Sefe} $\Gamma''$ of two subgraphs $R''$ of $R'$ and $B''$ of $B'$ has been constructed. Let $C''$ be the common graph of $R''$ and $B''$ and let $\E_{R''}$, $\E_{B''}$, and $\E_{C''}$ be the planar embeddings of $R''$, $B''$, and $C''$ in $\E'$, respectively. Assume that the following properties hold for $\Gamma''$. 

\begin{itemize}
\item {\sc Bends and crossings}: every edge of $R''$ or $B''$ is a polygonal curve with at most four bends, every edge of $C''$ is a straight-line segment, and every exclusive edge of $R''$ crosses every exclusive edge of $B''$ at most sixteen times;
\item {\sc Embedding}: the restrictions of $\Gamma''$ to the vertices and edges of $R''$, $B''$, and $C''$ are equivalent to $\E_{R''}$, $\E_{B''}$, and $\E_{C''}$, respectively; and
\item {\sc Polygons}: each not-yet-drawn vertex or edge of $R'$ or $B'$ lies in $\E'$ inside a simple cycle $\delta_f$ in $C''$ which is represented in $\Gamma''$ by a star-shaped empty polygon $\Delta_f$; further, if an edge exists in $C'$ that lies inside $\delta_f$ in $\E'$ and that belongs to the same $2$-connected component of $C'$ as $\delta_f$, then $\Delta_f$ is a strictly-convex polygon. 
\end{itemize}

These properties are initially met with $R''=B''=C''=\delta^*$ and with $\Gamma''=\Delta^*$. In particular, all the vertices and edges of $R'$ and $B'$ that are not part of $\delta^*$ lie inside $\delta^*$ in $\E'$, because of the initial augmentation; further, the interior of $\Delta^*$ in $\Gamma''$ is empty. It remains to describe how to insert in $\Gamma''$ some vertices and edges of $R'$ and $B'$ that are not yet in $\Gamma''$, while maintaining the above properties. Since $R'$ and $B'$ are finite graphs, this will eventually lead to a {\sc sefe} of $R'$ and $B'$. We distinguish two cases.

\paragraph{Case 1:} There exists a simple cycle $\delta_f$ in $C''$ whose interior in $\Gamma''$ is empty, and there exists an edge $e_f$ in $C'$ that lies inside $\delta_f$ in $\E'$ and belongs to the same $2$-connected component of $C'$ as $\delta_f$. By property {\sc Polygons}, $\delta_f$ is represented by a strictly-convex polygon $\Delta_f$ in $\Gamma''$, as in \figurename~\ref{fig:planar-cases-convex}. Consider the maximal $2$-connected subgraph $S_f$ of $C'$ whose outer face in $\E'$ is delimited by $\delta_f$; note that $e_f$ is an edge of $S_f$. As observed in~\cite{hn-acssdpg-10}, a straight-line plane drawing $\Gamma_f$ of $S_f$ exists in which the outer face of $S_f$ is delimited by $\Delta_f$ and every internal face is delimited by a star-shaped polygon. Plug $\Gamma_f$ in $\Gamma''$, so that they coincide along $\Delta_f$, obtaining a drawing $\Gamma'''$, as in \figurename~\ref{fig:planar-cases-skeleton}. Properties {\sc Bends and crossings} and {\sc Embedding} are clearly satisfied by $\Gamma'''$. Concerning property {\sc Polygons}, each vertex or edge of $R'$ or $B'$ that is not in $\Gamma'''$ and that lies inside $\delta_f$ in $\E'$, also lies inside one of the simple cycles $\delta_{f,1},\dots,\delta_{f,k}$ delimiting internal faces of $S_f$, given that these faces partition the interior of $\delta_f$ in $\E'$. Moreover, cycles $\delta_{f,1},\dots,\delta_{f,k}$ are represented by star-shaped polygons, by construction, whose interior is empty in $\Gamma'''$, as the interior of $\Delta_f$ is empty in $\Gamma''$. Finally, no edge exists in $C'$ lying inside $\delta_{f,i}$, for some $1\leq i\leq k$, and belonging to the same $2$-connected component of $C'$ as $\delta_{f,i}$, as any such edge would belong to $S_f$. 

\begin{figure}[tb]
  \centering
  \subfloat[]{\label{fig:planar-cases-convex}\includegraphics{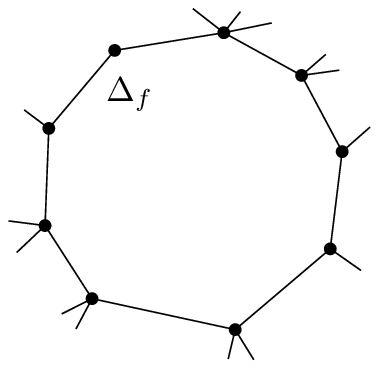}}\hfil%
  \subfloat[]{\label{fig:planar-cases-skeleton}\includegraphics{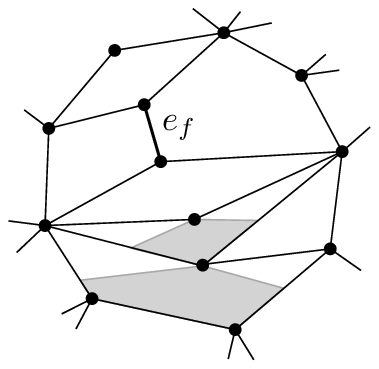}}\hfil%
  \subfloat[]{\label{fig:planar-cases-drawing}\includegraphics{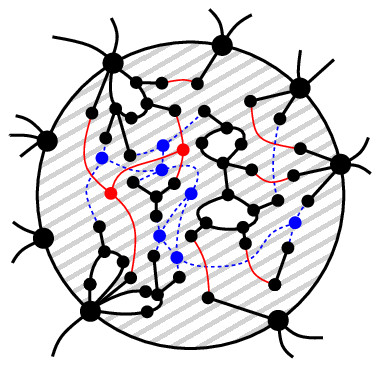}}%
  \caption{(a) Strictly-convex polygon $\Delta_f$ representing $\delta_f$ in $\Gamma''$. (b) Plugging $\Gamma_f$ in $\Gamma''$. The kernels of the star-shaped (non-convex) polygons delimiting internal faces of $S_f$ in $\Gamma_f$ are gray. (c) Graphs $C'_{f}$, $R'_{f}$, and $B'_{f}$; $\delta_f$ is drawn by thick lines and $f$ is shaded.} 
  \label{fig:planar-cases}
\end{figure}

\paragraph{Case 2:} If Case 1 does not apply, then every simple cycle $\delta_f$ in $C''$ whose interior in $\Gamma''$ is empty delimits a face $f$ in its interior in $\E_{C'}$; other vertices and edges of $C'$ might be incident to $f$ though. By property {\sc Polygons}, assuming that $\Gamma''$ is not yet a {\sc Sefe} of $R'$ and $B'$, there exists a simple cycle $\delta_f$ in $C''$ whose interior in $\Gamma''$ is empty containing a not-yet-drawn vertex or edge of $R'$ or $B'$ in its interior in $\E'$, as in \figurename~\ref{fig:planar-cases-drawing}; further, $\delta_f$ is represented by a star-shaped polygon $\Delta_f$ in $\Gamma''$. Let $C'(f)$ be the subgraph of $C'$ composed of the vertices and edges incident to $f$ in $\E_{C'}$. Also, let $R'(f)$ be the subgraph of $R'$ composed of $C'(f)$ and of the red vertices and edges lying in $f$ in $\E'$; graph $B'(f)$ is defined analogously. Let $\E_{R'(f)}$, $\E_{B'(f)}$, and $\E_{C'(f)}$ be the restrictions of $\E'$ to $R'(f)$, $B'(f)$, and $C'(f)$, respectively. We have the following main lemma: 

\begin{lemma} \label{le:one-face-extension}
There exists a {\sc sefe} $\Gamma'_f$ of $R'(f)$ and $B'(f)$ with the following properties:

\begin{itemize}
\item every edge is a polygonal curve with at most four bends, every common edge is a straight-line segment, and every two exclusive edges cross at most sixteen times;
\item $\Gamma'_f$ restricted to $R'(f)$, $B'(f)$, and $C'(f)$ is equivalent to $\E_{R'(f)}$, $\E_{B'(f)}$, and $\E_{C'(f)}$, respectively;
\item cycle $\delta_f$ is represented by $\Delta_f$; and
\item every simple cycle of $C'(f)$ different from $\delta_f$ is represented by an empty strictly-convex polygon in $\Gamma'_f$.
\end{itemize}
\end{lemma}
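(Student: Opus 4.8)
The plan is to reproduce, inside the single face $f$, the \emph{contraction -- simultaneous embedding -- expansion} pipeline of Sections~\ref{se:trees} and~\ref{se:tree-planar}, now applied to two planar graphs at once. The governing observation is that every vertex of $C'(f)$ is incident to $f$, so $C'(f)$ is outerplanar with $f$ as a common incident face; moreover, by Property~\ref{pr:degree1-bis} the edges around each vertex of $C'(f)$ are either all black or consist of a single black edge followed by a single exclusive (red or blue) edge. Exactly as in Step~2 of Section~\ref{se:tree-planar}, I would first restrict $\E'$ to the embeddings $\E_{R'(f)}$, $\E_{B'(f)}$, $\E_{C'(f)}$, then select edges $r(S)$ and $b(S)$ and fix the vertex order $\varrho_S$ for every connected component $S$ of $C'(f)$, assuming (as there) that each such component is incident to at least one red and one blue edge.

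Next I would \emph{contract} every component $S$ of $C'(f)$ to a single vertex, producing planar multigraphs $R''(f)$ and $B''(f)$ that share the contracted vertices and inherit their embeddings from $\E'$; the cycle $\delta_f$ and any other cycle of $C'(f)$ becomes a self-loop, which is precisely why the augmentation of Lemma~\ref{le:graph-augmentation}, with its two-subdivision device for non-simple graphs, is needed. Applying Lemma~\ref{le:graph-augmentation} independently to $R''(f)$ and to $B''(f)$ yields Hamiltonian cycles together with the {\sc Starting edge} and {\sc Start to the left} properties. The core step is then a \emph{simultaneous embedding} of $R''(f)$ and $B''(f)$ combining Lemmas~\ref{lem:tree-tree-se} and~\ref{lem:planar-tree-se}: assign $x$-coordinates to the red vertices along their Hamiltonian cycle and $y$-coordinates to the blue vertices along theirs, place the red (resp.\ blue) subdivision vertices on the convex curve $y=-x^2$ (resp.\ $x=-y^2$), and realize each 2PBE with $x$-monotone red edges and $y$-monotone blue edges drawn as in Lemma~\ref{lem:planar-tree-se}. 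Each exclusive edge then has at most four bends (one, zero, one on its three 2PBE pieces plus the two subdivision bends), the angle and first-edge conditions hold, and the restrictions are equivalent to the prescribed embeddings.

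I expect the delicate part to be the bound of sixteen crossings per exclusive edge pair. A four-bend edge is a five-segment curve, so the naive bound is $25$; the improvement to $16=4\cdot 4$ comes from the argument of Lemma~\ref{lem:planar-tree-se}, now used for both colours. Since all red vertices have $x$-coordinate at least $1$ and red edges are $x$-monotone, the red edges lie in the half-plane $x>0$, except that the middle (right-page) segment of a subdivided red edge dips onto $y=-x^2$ and hence lies in $y<0$; symmetrically every blue edge lies in $y>0$ while its middle segment lies in $x<0$. Thus the middle red segment (in $y<0$) crosses no segment of the blue edge (all in $y>0$), and the middle blue segment (in $x<0$) crosses no segment of the red edge (all in $x>0$), leaving at most $4\cdot 4$ crossings between the remaining segments. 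I would still need to verify, as in Lemma~\ref{lem:planar-tree-se}, that placing the red right-page matching on $y=-x^2$ and the blue one on $x=-y^2$ keeps the two colours internally plane and out of each other's convex hulls.

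Finally I would \emph{expand} the components of $C'(f)$ as in Lemma~\ref{lem:tree-tree-expansion}: every component $S$ other than the one containing $\delta_f$ is drawn on the upper-right quadrant of a tiny disk in the order $\varrho_S$, which by Lemma~\ref{le:planar-tree-convex-position} places it on a strictly-convex curve and therefore represents each of its cycles as an empty strictly-convex polygon, giving the last stated property. The component containing $\delta_f$ must realize the prescribed \emph{star-shaped} polygon $\Delta_f$ rather than a small disk; here I would use that $\Delta_f$ has a non-empty kernel, placing the relevant bend-points near a kernel point so that the straight segments joining them to the vertices of $\delta_f$ on $\partial\Delta_f$ remain inside $\Delta_f$ and pairwise disjoint. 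Collecting the bend, crossing, equivalence, and polygon guarantees then yields $\Gamma'_f$ with all four required properties. The main obstacle is the combined simultaneous-embedding step with its sixteen-crossing analysis; realizing the star-shaped $\Delta_f$ during the expansion of the $\delta_f$-component is the secondary technical point.
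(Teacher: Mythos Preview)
Your pipeline and the sixteen-crossing analysis match the paper's proof. One small slip: $\delta_f$ does not become a self-loop under contraction; it is a cycle of $C'(f)$ and hence is simply absorbed into the vertex $u^*$. Self-loops in $R''(f)$ and $B''(f)$ come only from exclusive edges whose two endpoints lie in the same component of $C'(f)$.

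The substantive gap is the expansion of the component $S^*$ containing $\delta_f$, which you flag but under-describe. Two points are missing. First, $S^*$ is in general a cactus that strictly contains $\delta_f$: by Property~\ref{pr:degree1-bis} every vertex of $S^*$ incident to an exclusive edge has degree one in $C'$, hence is \emph{not} on $\delta_f$; so your plan of reconnecting ``to the vertices of $\delta_f$ on $\partial\Delta_f$'' targets the wrong vertices, and you still owe a drawing of the rest of $S^*$ in which every non-$\delta_f$ cycle is a strictly-convex polygon. The paper supplies this via a dedicated straight-line drawing $\Gamma^*$ of the whole cactus $S^*$ in which $\delta_f$ is realised as $\Delta_f$, every other cycle is strictly convex, and all degree-one vertices of $S^*$ are placed on a convex arc $\mathcal{P}^*$ inside the kernel of $\Delta_f$. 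Second, merely putting bend-points ``near a kernel point'' does not prevent the reconnection segments from recrossing the simultaneous embedding: the paper forces $u^*$ to the extreme corner $(1,|V(B'''(f))|)$ with the first bend of every incident red (blue) edge on the line $x=1.5$ (resp.\ $y=|V(B'''(f))|-0.5$), so that after deleting $u^*$ the entire drawing lies strictly to one side of each of these lines; it then rotates the drawing by $3\pi/4$, scales it, and places it in a right-angle wedge inside the kernel whose bounding rays are (nearly) the rotated images of those two lines. Only with this setup do the straight reconnection segments from the bend-lines out to $\mathcal{P}^*$ stay below $\ell_v$ for red and symmetrically for blue, avoiding re-entry into the drawing.
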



In order to prove Lemma~\ref{le:one-face-extension}, we present an algorithm consisting of four steps, that resemble Steps 3--6 of the algorithm in Section~\ref{se:tree-planar}. Note that $R'(f)$ and $B'(f)$ are both connected, since $R'$ and $B'$ are connected. We can hence assume that every component $S$ of $C'(f)$ is incident to at least one red and one blue edge, we can choose edges $r(S)$ and $b(S)$, and we can define an ordering $\varrho_S$ of the vertices of $S$ as in Section~\ref{se:trees}.


\subsubsection{Step~1: Contraction.} Contract each component $S$ of $C'(f)$ to a
single vertex $v$. The resulting planar multigraphs $R''(f)$ and $B''(f)$ have  planar embeddings $\E_{R''(f)}$ and $\E_{B''(f)}$ inherited from $\E_{R'(f)}$ and $\E_{B'(f)}$. Vertex $v$ is common to $R''(f)$ and $B''(f)$. Let $r(v)$ and $b(v)$ be the edges corresponding to $r(S)$ and $b(S)$ after the contraction. We stress the fact that a component $S^*$ of $C'(f)$ contains cycle $\delta_f$. While $S^*$ is contracted to a vertex $u^*$ as every other component of $C'(f)$, it will later play a special role. We also remark that, unlike the other components, the order of the edges incident to $S^*$ ``changes'' after the contraction. That is, consider $\E_{R'(f)}$ and $\E_{B'(f)}$, draw a simple closed curve $\gamma$ in the interior of $f$ arbitrarily close to $S^*$, and consider the {\em counter-clockwise} order in which the edges of $R'(f)$ and $B'(f)$ intersect $\gamma$; then this is also the {\em clockwise} order in which the same edges are incident to $u^*$ in $\E_{R''(f)}$ and $\E_{B''(f)}$. 

\subsubsection{Step~2: Hamiltonian augmentations.} We compute Hamiltonian augmentations $R'''(f)$ of $R''(f)$ and $B'''(f)$ of $B''(f)$. This is done independently for $R''(f)$ and $B''(f)$ by means of the algorithm in the proof of Lemma~\ref{le:graph-augmentation} (with an obvious mapping between the notation of Lemma~\ref{le:graph-augmentation} and the one here). As opposed to Section~\ref{se:tree-planar}, both graphs might need to be subdivided in order to augment them to Hamiltonian. 

\subsubsection{Step~3: Simultaneous embedding.} A simultaneous embedding of $R''(f)$ and $B''(f)$ is constructed by means of an algorithm very similar to the one in the proof of Lemma~\ref{lem:planar-tree-se}. Let $\sigma_v$ be defined as in Section~\ref{se:trees}. We have the following. 

\begin{lemma} \label{lem:planar-planar-se}
For every $\epsilon>0$, $R''(f)$ and $B''(f)$ admit a simultaneous embedding $\Gamma''_f$ in which:
\begin{itemize}
\item every edge of $R''(f)$ and $B''(f)$ is a polygonal curve with at most four bends;
\item every two edges cross at most sixteen times (counting an adjacency as one crossing);
\item all edges of $R''(f)$ ($B''(f)$) incident to a vertex $v$ in $R''(f)$ ($B''(f)$) leave $v$ within an angle of $[-\epsilon;+\epsilon]$ with respect to the positive $y$-direction (resp.\ $x$-direction);
\item $\Gamma''_f$ restricted to $R''(f)$ ($B''(f)$) is equivalent to $\E_{R''(f)}$ (resp. $\E_{B''(f)}$);
\item for every common vertex $v$, the first red (blue) edge in $\sigma_v$ is $r(v)$ (resp. $b(v)$); and 
\item vertex $u^*$ is at point $(1,|V(B'''(f))|)$; the straight-line segments incident to $u^*$ in the drawing of $R''(f)$ (of $B''(f)$) have their endpoints different from $u^*$ on the straight line $x=1.5$ (resp. $y=|V(B'''(f))|-0.5$); every other vertex or bend of an edge of $R''(f)$ (resp. of $B''(f)$) is to the right of (resp. below) that line.  
\end{itemize}
\end{lemma}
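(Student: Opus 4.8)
The plan is to follow the construction in the proof of Lemma~\ref{lem:planar-tree-se} almost verbatim, the only structural novelty being that \emph{both} multigraphs are now augmented to Hamiltonicity by the algorithm of Lemma~\ref{le:graph-augmentation}, since neither $R''(f)$ nor $B''(f)$ is a tree. First I would fix the vertex positions. I assign $x$-coordinates $1,\dots,|V(R'''(f))|$ to the vertices of $R'''(f)$ in the order of its Hamiltonian cycle $\cal C$, and $y$-coordinates $|V(B'''(f))|,\dots,1$ to the vertices of $B'''(f)$ in the order of its Hamiltonian cycle, in both cases \emph{starting at $u^*$} (the vertex to which the special component $S^*$ was contracted). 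This forces $u^*$ to the point $(1,|V(B'''(f))|)$ and places every other vertex at $x\geq 2$ and at $y\leq|V(B'''(f))|-1$. As in Lemma~\ref{lem:planar-tree-se}, I place the subdivision vertices $V_s^R$ of $R'''(f)$ on the downward parabola $y=-x^2$; by the symmetric role of the two axes, I place the subdivision vertices $V_s^B$ of $B'''(f)$ on the leftward parabola $x=-y^2$. All remaining (dummy) vertices receive coordinate $0$ on the relevant axis and are deleted at the end, together with their incident edges.

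Next I would draw the edges independently for the two graphs, exactly as in Lemma~\ref{lem:planar-tree-se}. For $R$, the $2$-page book embedding of $R'''(f)$ induced by $\cal C$ splits its edges into a left page and a right page; the right-page edges form a matching on $V_s^R$ and are drawn as straight segments (hence lie in $\{y<0\}$), whereas the left-page edges are drawn as upward $1$-bend curves in a linearization of their nesting order. The symmetric construction draws $B$, with its right-page matching on $V_s^B$ drawn as straight segments lying in $\{x<0\}$. The bend count (one bend for an unsubdivided edge, four for a twice-subdivided one), the equivalence to $\E_{R''(f)}$ and $\E_{B''(f)}$, the starting-edge and angle-at-$v$ properties, and the planarity of each graph in isolation all transfer verbatim from Lemma~\ref{lem:planar-tree-se} together with properties {\sc Embedding}, {\sc Starting edge} and {\sc Start to the left} of Lemma~\ref{le:graph-augmentation}. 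For the last bullet I would note that $u^*$ carries only left-page edges, as it is not a subdivision vertex; since the strip $1<x<2$ is vertex-free, I can route each edge incident to $u^*$ so that its first segment ends on the line $x=1.5$ (symmetrically $y=|V(B'''(f))|-0.5$ for $B$), while keeping it a valid $1$-bend left-page curve. All remaining vertices and bends then lie at $x>1.5$ (resp.\ below $y=|V(B'''(f))|-0.5$) by the integrality of the coordinates.

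The crux is the crossing bound. Each exclusive edge is a concatenation of at most five straight segments: either a single left-page $1$-bend curve ($2$ segments), or a left-page curve, a right-page segment, and a left-page curve ($2{+}1{+}2$ segments). For a red edge $e_r$ and a blue edge $e_b$ the naive count is $5\times5=25$, and the improvement to sixteen rests on the disjoint regions occupied by the two middle (right-page) segments. I would observe that the middle segment $M_r$ of $e_r$ lies in $\{y<0\}$, while every part of $e_b$ lies in $\{y>0\}$: all blue vertices sit at $y\geq1$, the parabola $x=-y^2$ keeps $V_s^B$ at $y\geq1$, and the near-horizontal left-page curves of $B$ stay within the $y$-range of their endpoints for $\epsilon$ small; hence $M_r$ meets no segment of $e_b$. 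Symmetrically $M_b\subseteq\{x<0\}$ while all of $e_r$ lies in $\{x>0\}$, so $M_b$ meets no segment of $e_r$. This discards the $5$ pairs involving $M_r$ and the $4$ remaining pairs involving $M_b$, leaving at most the $4\times4=16$ pairs between the two pairs of left-page segments; the bound of sixteen (including a possible adjacency) follows, and smaller combinations give $4$ or $8$. I expect this regional argument --- pinning $M_r$ to the lower half-plane and $M_b$ to the left half-plane and certifying that the opposing edge stays strictly on the other side for all sufficiently small $\epsilon$ --- to be the main obstacle.
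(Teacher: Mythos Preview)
Your proposal is correct and follows essentially the same approach as the paper: both apply the construction of Lemma~\ref{lem:planar-tree-se} symmetrically to the two Hamiltonian augmentations from Lemma~\ref{le:graph-augmentation}, place the two sets of subdivision vertices on $y=-x^2$ and $x=-y^2$ respectively, force the bend of each edge incident to $u^*$ onto the line $x=1.5$ (resp.\ $y=|V(B'''(f))|-0.5$), and reduce $25$ to $16$ by observing that the middle segment of $e_r$ lies in $\{y<0\}$ while all of $e_b$ lies in $\{y\ge 0\}$, and symmetrically. Your justification of the regional separation is in fact slightly more explicit than the paper's; the only place where the paper is more concrete is in specifying that the bend of $(u^*,v)$ sits at the intersection of $x=1.5$ with the ray $\varrho_v$ at angle $\pi/2+\epsilon_{u^*v}$, which you should spell out rather than just assert that such a routing exists.
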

\begin{proof}
The algorithm to draw $R''(f)$ and $B''(f)$ is similar to one presented in the proof of Lemma~\ref{lem:planar-tree-se} to draw $R''$. Assign the vertices of $R'''(f)$ (of $B'''(f)$) with distinct positive integer $x$-coordinates (resp. $y$-coordinates) according to their order in the Hamiltonian cycle of $R'''(f)$ (resp. according to the reverse order in the Hamiltonian cycle of $B'''(f)$). It is important here that  $u^*$ is the vertex of $R'''(f)$ (of $B'''(f)$) that gets the smallest $x$-coordinate (resp. the largest $y$-coordinate). Place the subdivision vertices for the edges of $R''(f)$ (of $B''(f)$) on the curve $y=-x^2$ (resp. $x=-y^2$); set any non-assigned coordinate to $0$. The edges of $R'''(f)$ and $B'''(f)$ are drawn as the edges of $R'''$ in the proof of Lemma~\ref{lem:planar-tree-se}, except for the edges incident to $u^*$. Namely, the bend-point of an edge $(u^*,v)$ of $R'''(f)$ (of $B'''(f)$) is placed at the intersection point between the line $x=1.5$ (resp. $y=|V(B'''(f))|-0.5$) and the ray $\varrho_v$ emanating from $v$ with an angle of $\pi/2+\epsilon_{{u^*}v}$ (resp. $\epsilon_{{u^*}v}$), for some suitably small $0<\epsilon_{{u^*}v}<\epsilon$.



Remove from the drawing the vertices of $R'''(f)$ and $B'''(f)$ that are neither vertices of $R''(f)$ or $B''(f)$, nor subdivision vertices for the edges of $R''(f)$ or $B''(f)$, and interpret the subdivision vertices for the edges of $R''(f)$ and $B''(f)$ as bend-points. This results in a {\sc Sefe} $\Gamma''_f$ of $R''(f)$ and $B''(f)$, which can be proved to satisfy the required properties exactly as in the proof of Lemma~\ref{lem:planar-tree-se}. In particular, if edges $e_r$ of $R''(f)$ and $e_b$ of $B''(f)$ have four bends each, then they cross at most twenty-five times. This bound can be improved to sixteen, since the third segment of $e_r$ does not cross any of the five segments composing $e_b$ (given that the former lies in the open half-plane $y<0$, while the latter lie in the closed half-plane $y\geq 0$), and vice versa. 
\end{proof}

\subsubsection{Step~4: Expansion.} This step is more involved than in Sections~\ref{se:trees} and~\ref{se:tree-planar}, because when expanding the component $S^*$, we need to ensure that the cycle $\delta_f$ is drawn as $\Delta_f$. 

We first expand the components $S\neq S^*$ of $C'(f)$ in $\Gamma''_f$. Differently from the previous sections, $S$ is a {\em cactus graph}, rather than a tree, which is a graph whose vertices and edges are all incident to a common face, in this case $f$. However, Lemma~\ref{le:planar-tree-convex-position} holds true (with the same proof) even if $S$ is a cactus graph. Hence, we expand the components $S\neq S^*$ one by one in $\Gamma''_f$. When a component $S$ is expanded, its vertices are placed in the order $\varrho_S$ on the upper-right quadrant of the boundary of a suitably small disk $D_\epsilon$ centered at the vertex $S$ was contracted to. Denote again by $\Gamma''_f$ the resulting {\sc Sefe} in which every component $S\neq S^*$ of $C'(f)$ has been expanded. Note that every simple cycle of each component $S\neq S^*$ is an empty strictly-convex polygon in $\Gamma''_f$, since its incident vertices lie on a strictly-convex curve, namely the boundary of $D_\epsilon$.

\begin{figure}[tb]
  \centering
  \subfloat[]{\label{fig:planar-cases-2-a}\includegraphics{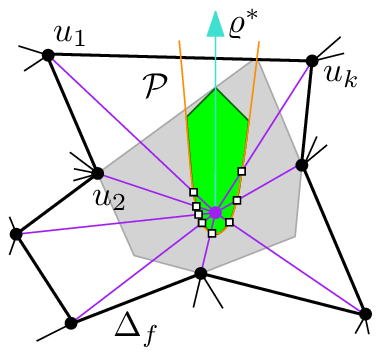}}\hfil%
  \subfloat[]{\label{fig:planar-cases-2-b}\includegraphics{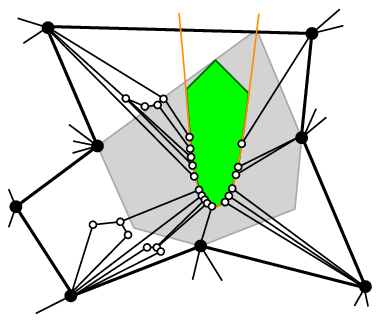}}\hfil%
  \subfloat[]{\label{fig:planar-cases-2-d}\includegraphics{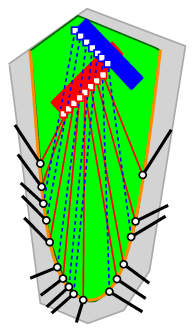}}
\caption{(a) Geometry inside $\Delta_f$: $p^*$ is purple, ${\cal H}^*$ is green, the kernel of $\Delta_f$ is gray, and points $p_1,\dots,p_k$ are empty squares. (b) Drawing $\Gamma^*$; vertices of $S^*$ not in $\delta_f$ are white disks. (c) Reconnecting $\Gamma''_f$ to $\Gamma^*$; drawing $\Gamma''_f$ is represented by a red and a blue rectangle. Red and blue squares represent bendpoints of $R'''(f)$ and $B'''(f)$ adjacent to $u^*$.} 
\label{fig:planar-cases-2}
\end{figure}

In order to complete the construction of a {\sc Sefe} $\Gamma'_f$ of $R'(f)$ and $B'(f)$ as requested by Lemma~\ref{le:one-face-extension}, it remains to deal with the cactus graph $S^*$ containing $\delta_f$ whose edges are all incident to $f$. We sketch the plan: define a region ${\cal H}^*$ inside $\Delta_f$ (\figurename~\ref{fig:planar-cases-2-a}); construct a drawing $\Gamma^*$ of $S^*$ such that $\delta_f$ is represented as $\Delta_f$ and all the other vertices and edges of $S^*$ are inside $\Delta_f$ but outside ${\cal H}^*$ (\figurename~\ref{fig:planar-cases-2-b}); rotate and scale $\Gamma''_f$ and place it in ${\cal H}^*$; finally, connect $\Gamma^*$ with $\Gamma''_f$ via straight-line segments, thus obtaining $\Gamma'_f$ (\figurename~\ref{fig:planar-cases-2-d}).

We begin by defining ${\cal H}^*$. Denote by $u_1,\dots,u_k$ the counter-clockwise order of the vertices along $\delta_f$. Removing the edges of $\delta_f$ disconnects $S^*$ into $k$ cactus graphs; w.l.o.g. assume that $u_1$ is in the one of these cactus graphs that is incident to $r(S^*)$. By property {\sc Polygons}, $\Delta_f$ is star-shaped, hence it has a non-empty kernel. Let $p^*$ be any point in this kernel and $\varrho^*$ be a ray emanating from $p^*$ through $u_1$. Rotate $\varrho^*$ clockwise around $p^*$ of a sufficiently small angle so that no vertex of $\Delta_f$ is encountered during the rotation. For sake of simplicity of description, assume that the origin of the Cartesian axes is at $p^*$, with $\varrho^*$ being the positive $y$-axis. Draw a parabola $\cal P$ with equation $y=ax^2-b$, with $a,b>0$; $a$ is large enough and $b$ is small enough so that $\cal P$ intersects all of $\seg{p^*u_1},\dots,\seg{p^*u_k}$ at points $p_1,\dots,p_k$, and so that a wedge ${\cal W}^*$ with angle $\pi/2$, centered at a point on $\varrho^*$, and bisected by a ray in the negative $y$-direction exists containing all of $p_1,\dots,p_k$ in its interior and having an intersection ${\cal H}^*$ with the region $y>ax^2-b$ entirely lying in the kernel of $\Delta_f$. Let be ${\cal P}^*$ the part of $\cal P$ in ${\cal W}^*$.

Next, we construct a drawing $\Gamma^*$ of $S^*$ (see \figurename~\ref{fig:planar-cases-2-b}).

\begin{lemma} \label{lem:drawing-s-star}
There exists a straight-line plane drawing $\Gamma^*$ of $S^*$ such that: (i) $\Gamma^*$ is equivalent to the restriction of $\E_{C'(f)}$ to $S^*$; (ii) $\delta_f$ is represented by $\Delta_f$; (iii) every simple cycle of $S^*$ different from $\delta_f$ is an empty strictly-convex polygon in $\Gamma^*$; (iv) all the vertices of $S^*$ incident to exclusive edges of $R'(f)$ and $B'(f)$ are on ${\cal P}^*$ and in the interior of ${\cal W}^*$; and (v) $\Gamma^*$ has no intersection with ${\cal H}^*$.
\end{lemma}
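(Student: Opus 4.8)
The plan is to first realize $\delta_f$ as the prescribed polygon $\Delta_f$ and then draw the rest of $S^*$ inside $\Delta_f$, funnelling it towards the parabola. Two structural observations drive the construction. First, since every edge of $S^*$ is incident to the single face $f$, no cycle of $S^*$ can enclose any vertex or edge of $S^*$; thus $S^*$ is outerplanar with $f$ as its outer face, and every cycle other than $\delta_f$ bounds an empty face. Second, by Property~\ref{pr:degree1-bis} every vertex of $S^*$ incident to an exclusive edge of $R'(f)$ or $B'(f)$ has degree one in $C'$, hence is a pendant leaf of $S^*$; in particular such a vertex lies on no cycle, so it can be placed independently of the convexity requirement (iii).

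I would draw $S^*$ as follows. Place the vertices $u_1,\dots,u_k$ of $\delta_f$ at the corresponding corners of $\Delta_f$ in counter-clockwise order, satisfying (ii). Removing the edges of $\delta_f$ splits $S^*$ into sub-cacti $S_1,\dots,S_k$, with $S_i$ attached at $u_i$ and embedded inside $\Delta_f$ in the part of $f$ incident to $u_i$. Because $p^*$ lies in the (non-empty) kernel of the star-shaped polygon $\Delta_f$, each segment $\seg{p^*u_i}$ lies inside $\Delta_f$ and meets the parabola ${\cal P}$ at $p_i$. For each $i$ I would fix a thin cone $T_i$ with apex at $u_i$ around $\seg{u_ip_i}$, chosen thin enough that the cones $T_1,\dots,T_k$ are pairwise disjoint, each $T_i$ is contained in $\Delta_f$, each $T_i$ meets ${\cal P}^*$ only in a small arc around $p_i$, and each $T_i$ lies outside the open region ${\cal H}^*$ (all possible since the $\seg{p^*u_i}$ are distinct segments through the kernel and ${\cal H}^*$ lies strictly above ${\cal P}$).

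Inside $T_i$ I would separate $S_i$ into its special leaves (those carrying exclusive edges) and the remaining core $S_i^-$. The core $S_i^-$ is itself an outerplanar cactus, so by the convex-position argument of Lemma~\ref{le:planar-tree-convex-position} its vertices can be placed on a short strictly-convex arc near $u_i$, in the cyclic order in which they appear on the boundary of $f$; this yields a plane straight-line drawing in which every cycle of $S_i^-$ is an empty strictly-convex polygon, giving (iii). Each special leaf $w$ is then placed on the arc ${\cal P}^*\cap T_i$ near $p_i$, inside ${\cal W}^*$, giving (iv) and (v), and joined to its unique neighbour in $S_i^-$ by a straight segment. Ordering the leaves along ${\cal P}^*$ by the boundary traversal of $f$, the bridges to the leaves cross neither one another nor the core, by the same angle-monotonicity argument as in Proposition~\ref{prop:tree-tree-expansion-local}. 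Finally, taking the cones in counter-clockwise order $T_1,\dots,T_k$ makes the special leaves appear along ${\cal P}^*$ exactly in the order $\varrho_{S^*}$, which the later reconnection to $\Gamma''_f$ requires. Property (i) follows because the convex placements realise the rotation system of $\E_{C'(f)}\vert_{S^*}$ and the cones are arranged in the embedding order around $\Delta_f$, while disjointness of the $T_i$ together with the per-cone planarity gives a plane drawing overall.

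The main obstacle is the geometric bookkeeping in the middle step: producing cones that simultaneously stay inside the star-shaped (not necessarily convex) polygon $\Delta_f$, remain pairwise disjoint, reach ${\cal P}^*$ at the prescribed points, and avoid the open region ${\cal H}^*$, while the cactus inside each cone is drawn with all cycles strictly convex and empty. The convexity and planarity of the cactus are routine once one exploits that the special vertices are leaves, so they never belong to a cycle; the delicate point is the combinatorial alignment, namely checking that the order of the special leaves forced on ${\cal P}^*$ by the counter-clockwise cone order coincides with $\varrho_{S^*}$ -- and in particular is consistent with the \emph{reversed} clockwise order around $u^*$ introduced when $S^*$ was contracted in Step~1 -- so that the subsequent straight-line reconnection to $\Gamma''_f$ placed in ${\cal H}^*$ is crossing-free.
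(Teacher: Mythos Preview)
Your structural observations are correct (in particular that special vertices are degree-one leaves of $S^*$), and routing each sub-cactus $S_i$ through a thin cone $T_i$ toward ${\cal P}^*$ is the right picture. But the central step---placing the entire core $S_i^-$ on a single short convex arc near $u_i$ and then drawing straight bridges to the special leaves on ${\cal P}^*$---does not in general yield a plane drawing, and the appeal to Proposition~\ref{prop:tree-tree-expansion-local} is not justified.

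The difficulty is that in the boundary order $\varrho_{S_i}$ the special leaves are \emph{interleaved} with the core vertices. Pulling all leaves out to ${\cal P}^*$ effectively re-orders them after (or before) the whole core; nested pairs of edges can become alternating pairs, which then cross. Concretely, let $S_i$ be the path $u_i\text{--}a\text{--}b$ with special leaves $w_1$ on $a$, $w_2$ on $b$, $w_3$ on $u_i$, embedded so that the boundary walk of $f$ meets them in the order $w_1,w_2,w_3$. On your core arc the order is $u_i,a,b$, so the core-neighbours of $w_1,w_2,w_3$ sit at positions $2,3,1$. Since $2,3,1$ is neither increasing nor decreasing, \emph{no} orientation of the leaves along ${\cal P}^*$ avoids a crossing: with $w_1,w_2,w_3$ left-to-right the bridges $(b,w_2)$ and $(u_i,w_3)$ cross; with the reverse orientation $(a,w_1)$ and $(b,w_2)$ cross. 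The same mechanism lets a bridge pierce the closing chord of a core cycle. Proposition~\ref{prop:tree-tree-expansion-local} needs the two endpoint orders to agree, and here they do not.

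The paper sidesteps this by never putting the whole core on one arc. It processes the $2$-connected components of $S^*$ one at a time: each dummy segment $\seg{wp_w}$ toward ${\cal P}^*$ is opened into a thin triangle in which the next cycle is drawn on its own convex arc, and from every newly drawn vertex a fresh real or dummy segment is emitted toward a point close to $p_w$ on ${\cal P}^*$. This recursive fan-out keeps all segments toward ${\cal P}^*$ in their correct circular order at every stage, so alternating pairs never arise. Your construction can be salvaged along these lines, but as written the single-arc placement of $S_i^-$ is a genuine gap.
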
 
\begin{proof}
We construct $\Gamma^*$ by iteratively drawing $2$-connected components of $S^*$; every such component is either a simple cycle or an edge, since $S^*$ is a cactus graph. Recall that, by Property~\ref{pr:degree1-bis}, every vertex of $S^*$ incident to an exclusive edge of $R'(f)$ or $B'(f)$ has degree one in $S^*$. 

Initialize $\Gamma^*$ by drawing straight-line segments from $u_1,\dots,u_k$ to points on ${\cal P}^*$. For each $1\leq i\leq k$, the number of drawn straight-line segments incident to $u_i$ is equal to the number of $2$-connected components of $S^*$ containing $u_i$ and different from $\delta_f$; by choosing the endpoints of the segments incident to $u_i$ sufficiently close to $p_i$ on ${\cal P}^*$, it can be ensured that all these segments do not cross each other and have empty intersection with ${\cal H}^*$. Some drawn segments are ``real'', that is, they represent edges of $S^*$. Some other segments are ``dummy'', that is, they represent subgraphs of $S^*$ that still need to be drawn. Straight-line segments appear around each vertex $u_i$ in the order in which the corresponding subgraphs of $S^*$ appear around $u_i$ according to $\E_{C'(f)}$.

Now assume to have a plane straight-line drawing $\Gamma^*$ of a subgraph $D^*$ of $S^*$ such that the following invariant is satisfied (in addition to the properties in the statement of the lemma).
\begin{quote}
Consider the cactus graphs that result from the removal of the edges of $D^*$ from $S^*$. Each of these graphs that is not a single vertex is represented in $\Gamma^*$ by a dummy straight-line segment from its only vertex in $\Gamma^*$ to a point on ${\cal P}^*$; further, all these dummy straight-line segments do not cross each other, do not cross any other segment in $\Gamma^*$, and have empty intersection with ${\cal H}^*$.  
\end{quote}

\begin{figure}[tb]
\begin{center}
\begin{tabular}{c c c}
\mbox{\includegraphics{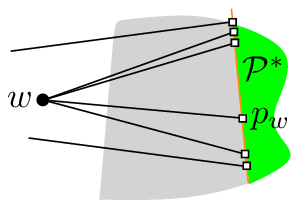}} \hspace{2mm} &
\mbox{\includegraphics{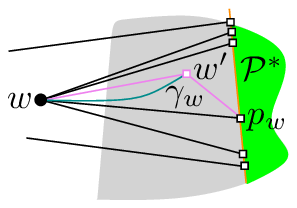}} \hspace{2mm} &
\mbox{\includegraphics{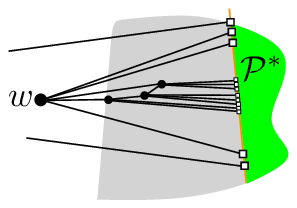}}\\
(a) \hspace{2mm}& (b) \hspace{2mm}& (c)
\end{tabular}
\caption{(a) A dummy straight-line segment $\seg{wp_w}$ representing a component $D$ in $\Gamma^*$. (b) Point $w'$ and curve $\gamma_w$. (c) Drawing component $D$ in $\Gamma^*$.} 
\label{fig:planar-cases-3}
\end{center}
\end{figure}

Note that the invariant is satisfied by $\Gamma^*$ after the initialization. Then it suffices to show how the invariant is maintained after drawing in $\Gamma^*$ a $2$-connected component $D$ of $S^*$, where just one vertex $w$ of $D$ is already in $\Gamma^*$, and where a dummy straight-line segment $\seg{wp_w}$ with $p_w\in {\cal P}^*$ represents $D$ in $\Gamma^*$, as in \figurename~\ref{fig:planar-cases-3}.a. Observe that $D$ is a simple cycle, as if it were an edge, it would be represented by a real straight-line segment and not a dummy straight-line segment. Consider a point $w'$ arbitrarily close to the midpoint of $\seg{wp_w}$. Draw a strictly-convex curve $\gamma_w$ inside triangle $\Delta_w=(w,w',p_w)$ connecting $w$ and $w'$, as in \figurename~\ref{fig:planar-cases-3}.b. Place the vertices of $D$ on $\gamma_w$, in the order they occur along $D$ according to $\E_{C'(f)}$; also, draw the edges of $D$ as straight-line segments, as in \figurename~\ref{fig:planar-cases-3}.c. Remove $\seg{wp_w}$ from $\Gamma^*$.  The polygon representing $D$ is empty, provided that $w'$ is sufficiently close to $\seg{wp_w}$, and strictly-convex, since its incident vertices lie on a strictly-convex curve. Further, the straight-line segments from the vertices of $D$ to $p_w$ do not cross each other, do not cross any other segment in $\Gamma^*$, and have empty intersection with ${\cal H}^*$, provided that $w'$ is sufficiently close to $\seg{wp_w}$. Hence, a suitable number of points on ${\cal P}^*$ can be chosen, all sufficiently close to $p_w$ so that the straight-line segments between these points and the vertices of $D$ do not cross each other, do not cross any other segment in $\Gamma^*$, and have empty intersection with ${\cal H}^*$; thus, the invariant is satisfied by the new $\Gamma^*$, which concludes the proof. 
\end{proof}


The construction of $\Gamma'_f$ is completed as follows (see \figurename~\ref{fig:planar-cases-2-d}). First, we delete $u^*$ and its incident straight-line segments from $\Gamma''_f$. Second, we rotate $\Gamma''_f$ counter-clockwise by an angle of $3\pi/4$. Third, we scale $\Gamma''_f$ down so that it fits inside a disk $D_{\varepsilon}$ with a suitably small radius $\varepsilon>0$. Fourth, we place $\Gamma''_f$ in $\Gamma^*$ so that $D_{\varepsilon}$ is inside ${\cal H}^*$ and is tangent to the half-lines delimiting ${\cal W}^*$. Finally, we complete the drawing of the exclusive edges of $R'(f)$ and $B'(f)$ by drawing straight-line segments from their bend-points previously adjacent to $u^*$ to the suitable vertices of $S^*$ on ${\cal P}^*$. We have the following.

\begin{lemma} \label{lem:complete-drawing}
$\Gamma'_f$ is a {\sc Sefe} of $R'(f)$ and $B'(f)$ with the properties required by Lemma~\ref{le:one-face-extension}, provided that $\varepsilon$ is sufficiently small. 
\end{lemma}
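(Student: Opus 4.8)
The plan is to verify in turn each of the four properties required by Lemma~\ref{le:one-face-extension}, checking that the construction just described (delete $u^*$ from $\Gamma''_f$, rotate by $3\pi/4$, scale into $D_\varepsilon\subset{\cal H}^*$, and reconnect via straight-line segments to the vertices of $S^*$ on ${\cal P}^*$) indeed yields a valid {\sc Sefe} with the stated bends, crossings, embedding, and polygon properties. The rotation angle of $3\pi/4$ is chosen precisely so that the edges incident to $u^*$ in $\Gamma''_f$, which by Lemma~\ref{lem:planar-planar-se} leave $u^*$ close to the positive $y$- and $x$-directions and end on the lines $x=1.5$ and $y=|V(B'''(f))|-0.5$, point roughly toward $p^*$ (the negative $y$-direction in the local coordinate system of $\Gamma^*$) after the placement in ${\cal H}^*$; this is what makes the reconnecting segments to ${\cal P}^*$ feasible without introducing crossings.

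First I would argue planarity and the embedding property. The drawing $\Gamma^*$ of $S^*$ is plane by Lemma~\ref{lem:drawing-s-star}, and the (rotated, scaled) copy of $\Gamma''_f$ is plane and, by Lemma~\ref{lem:planar-planar-se}, has the correct restricted embeddings. The two pieces occupy disjoint regions: $\Gamma^*$ lies inside $\Delta_f$ but outside ${\cal H}^*$ by property (v) of Lemma~\ref{lem:drawing-s-star}, while the copy of $\Gamma''_f$ sits inside $D_\varepsilon\subset{\cal H}^*$. It then remains to check that the reconnecting straight-line segments, which run from the bend-points formerly adjacent to $u^*$ (now on the boundary of $D_\varepsilon$, tangent to the half-lines of ${\cal W}^*$) to the vertices of $S^*$ on ${\cal P}^*$ inside ${\cal W}^*$, cross neither $\Gamma^*$, nor the copy of $\Gamma''_f$, nor each other in a way that violates the red/blue planarity. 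Here I would use a continuity/limit argument identical in spirit to Propositions~\ref{prop:tree-tree-expansion-global} and~\ref{prop:tree-tree-expansion-local}: as $\varepsilon\to 0$ the disk $D_\varepsilon$ shrinks toward a single point on $\varrho^*$, the reconnecting segments approach the segments $\seg{p^*u_i}$ (extended to ${\cal P}^*$), and the counter-clockwise order of these segments matches the clockwise order of the edges around $u^*$ established in Step~1; for $\varepsilon$ small enough the angular intervals stay disjoint, so no forbidden crossings arise, exactly as in the tree case.

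Next I would verify the polygon property: every simple cycle of $C'(f)$ other than $\delta_f$ must be an empty strictly-convex polygon. For cycles inside components $S\neq S^*$ this was already established when those components were expanded onto the boundary of their disks $D_\epsilon$; for cycles inside $S^*$ it follows from property (iii) of Lemma~\ref{lem:drawing-s-star}. That $\delta_f$ is represented by $\Delta_f$ is immediate from property (ii) of Lemma~\ref{lem:drawing-s-star}, since the placement of the copy of $\Gamma''_f$ does not touch $\Delta_f$. The bend count follows because each exclusive edge of $R'(f)$ or $B'(f)$ keeps the at-most-four-bends structure it had in $\Gamma''_f$ (the deletion of $u^*$ only turns its incident edges' former $u^*$-endpoints into genuine endpoints on $S^*$, adding no bends beyond those already counted), and common edges remain straight-line segments. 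The crossing bound of sixteen is inherited from Lemma~\ref{lem:planar-planar-se}: the reconnecting segments lie inside ${\cal W}^*$ and ${\cal H}^*$, which is kept empty of $\Gamma^*$ and of common edges, so they cross only the segments they are designed to, and the sixteen-crossing bound between any exclusive red and exclusive blue edge is preserved.

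The main obstacle I expect is the crossing-free reconnection of the two pieces, i.e.\ controlling the straight-line segments from the boundary of $D_\varepsilon$ to the vertices on ${\cal P}^*$. The subtlety is twofold: (a) these segments must avoid the already-drawn parts of $S^*$ that sit inside $\Delta_f$ but outside ${\cal H}^*$, which is why ${\cal H}^*$ was carved out to be empty and to lie in the kernel of $\Delta_f$; and (b) within each of the two colours the rotation-induced angular order of the bend-points leaving $D_\varepsilon$ must agree with the order of the target vertices along ${\cal P}^*$, so that same-coloured reconnecting segments do not cross each other. Establishing (b) requires tracing through the Step~1 remark that contraction reverses the cyclic order at $u^*$ (counter-clockwise around $\gamma$ becomes clockwise around $u^*$), combined with the $3\pi/4$ rotation, to confirm that the final orders are consistent; this bookkeeping is the technical heart of the proof, and I would present it carefully while relegating the surrounding planarity checks to the same small-$\varepsilon$ continuity arguments used earlier. $\Box$
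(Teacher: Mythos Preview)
Your approach is essentially the paper's, and your identification of the order-matching bookkeeping as the technical heart is exactly right. Two places, however, deserve sharpening. First, for showing that the reconnecting segments do not cross the (rotated, scaled) copy of $\Gamma''_f$, the paper does \emph{not} use a continuity argument in the style of Propositions~\ref{prop:tree-tree-expansion-global}--\ref{prop:tree-tree-expansion-local}; it instead exploits the last bullet of Lemma~\ref{lem:planar-planar-se}, which guarantees that all red bend-points formerly adjacent to $u^*$ lie on the line $x=1.5$ while every other red vertex or bend lies strictly to its right. After the $3\pi/4$ rotation this line becomes a separator $\ell_v$ nearly coinciding with one half-line of ${\cal W}^*$: the rest of $\Gamma''_f$ is above $\ell_v$, and each reconnecting segment has one endpoint on $\ell_v$ and the other on ${\cal P}^*$ strictly inside ${\cal W}^*$, hence below $\ell_v$. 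Your limit argument is delicate here because the limiting segments pass through the point to which $\Gamma''_f$ collapses. Second, your preservation of the sixteen-crossing bound is underargued: the reconnecting segments can and do introduce new proper crossings between exclusive red and blue edges, and the reason this is harmless is that Lemma~\ref{lem:planar-planar-se} counts an \emph{adjacency} at $u^*$ as one crossing, and each new crossing created by the expansion of $S^*$ replaces exactly such an adjacency---the same trade-off already used in Theorem~\ref{thm:planar-tree}.
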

\begin{proof}
We first prove that $\Gamma'_f$ is a {\sc Sefe} of $R'(f)$ and $B'(f)$. In particular, vertices and edges of $C'(f)$ have a unique representation in $\Gamma'_f$, hence it suffices to prove that the drawings of $R'(f)$ and $B'_f$ in $\Gamma'_f$ are planar; we will argue about the planarity of the drawing of $R'(f)$, as the one of $B'_f$ can be proved analogously. 

By Lemma~\ref{lem:planar-planar-se}, the drawing of $R''(f)$ in $\Gamma''_f$ is plane. By Lemma~\ref{le:planar-tree-convex-position}, $\Gamma''_f$ stays plane after all the components different from $S^*$ have been expanded. By Lemma~\ref{lem:drawing-s-star}, the drawing $\Gamma^*$ of $S^*$ is plane, as well. Further, $\Gamma''_f$ and $\Gamma^*$ do not cross each other, as the former lies in a disk $D_{\varepsilon}$ which is inside ${\cal H}^*$, provided that $\varepsilon$ is sufficiently small, while the latter does not intersect ${\cal H}^*$, by Lemma~\ref{lem:drawing-s-star}. It remains to argue that the straight-line segments drawn to restore the exclusive edges of $R'(f)$ do not cause crossings. 
\begin{itemize}
\item First, these segments lie in ${\cal H}^*$ if $\varepsilon$ is small enough, hence they do not intersect $\Gamma^*$. 
\item Second, they do not intersect red edges in $\Gamma''_f$; namely, by Lemma~\ref{lem:planar-planar-se} and assuming that the components of $C'(f)$ different from $S^*$ have been expanded in sufficiently small disks, we have that all the red edges in $\Gamma''_f$ lie to the right of the line $\ell_v$ with equation $x=1.5$. After the rotation of $\Gamma''_f$ by $3\pi/4$ counter-clockwise, $\Gamma''_f$ is above the rotated line $\ell_v$. Thus, it suffices to prove that all the straight-line segments drawn to reconnect the exclusive edges of $R'(f)$ are below or on $\ell_v$. Indeed, by Lemmata~\ref{lem:planar-planar-se} and~\ref{lem:drawing-s-star} each of these segments has one endpoint on $\ell_v$ and the other endpoint in the interior of ${\cal W}^*$; further, $\ell_v$ is arbitrarily close, depending on the value of $\varepsilon$, to the line delimiting ${\cal W}^*$ with slope $5\pi/4$. Hence, each straight-line segment drawn to reconnect an exclusive edge of $R'(f)$ has one end-point on $\ell_v$ and one end-point below it, provided that $\varepsilon$ is sufficiently small. 
\item Third, the straight-line segments drawn to reconnect the exclusive edges of $R'(f)$ do not cross each other, since the clockwise order in which the edges of $R'(f)$ are incident to $u^*$ (which by Lemma~\ref{lem:planar-planar-se} is also the left-to-right order in which the endpoints of the deleted red straight-line segments appear on $\ell_v$ after the rotation) coincides with the counter-clockwise order in which they are incident to vertices in $S^*$ (which is also the left-to-right order in which these vertices appear along ${\cal P}^*$). 
\end{itemize}

The bound on the number of bends in $\Gamma'_f$ follows from the corresponding bound for $\Gamma''_f$ in Lemma~\ref{lem:planar-planar-se} and from the fact that, when a component of $C'(f)$ is expanded, no new bends are introduced on the exclusive edges. In particular, the exclusive edges incident to one or two vertices in $S^*$ have respectively one or two straight-line segments in $\Gamma'_f$ they did not have in $\Gamma''_f$; however, in turn they lost respectively one or two straight-line segments in $\Gamma'_f$ they used to have in $\Gamma''_f$, namely those incident to $u^*$. 

The bound on the number of crossings is established as in Theorem~\ref{thm:planar-tree}. Consider any two exclusive edges $e'_r$ of $R'(f)$ and $e'_b$ of $B'(f)$. By Lemma~\ref{lem:planar-planar-se}, the corresponding edges $e''_r$ in $R''(f)$ and  $e''_b$ in $B''(f)$ cross at most sixteen times in $\Gamma''_f$, also counting adjacencies. While the expansions might introduce proper crossings between $e'_r$ and $e'_b$, they only do so in correspondence of an adjacency between $e''_r$ and $e''_b$; hence $e'_r$ and $e'_b$ cross at most sixteen times in $\Gamma'_f$. 

Finally, the properties that the edges of $C'(f)$ are straight, that $\Gamma'_f$ restricted to $R'(f)$, $B'(f)$, and $C'(f)$ is equivalent to $\E_{R'(f)}$, $\E_{B'(f)}$, and $\E_{C'(f)}$, respectively, that $\delta_f$ is represented by $\Delta_f$, and that every simple cycle of $C'(f)$ different from $\delta_f$ is represented in $\Gamma'_f$ by an empty strictly-convex polygon have been explicitly ensured while performing the construction. 
\end{proof}

Lemma~\ref{lem:complete-drawing} concludes the proof of Lemma~\ref{le:one-face-extension}. Next, plug $\Gamma'_f$ in $\Gamma''$, so that they coincide along $\Delta_f$, obtaining a drawing $\Gamma'''$. As in Case~1 and relying on Lemma~\ref{le:one-face-extension}, it is easily shown that Properties {\sc Bends and crossings}, {\sc Embedding}, and {\sc Polygons} are satisfied by $\Gamma'''$, thus completing the discussion of Case~2. We get the following.

\begin{theorem}\label{thm:planar-planar}
Let $R$ and $B$ be two planar graphs. If there exists a {\sc Sefe} of $R$ and $B$, then there also exists a  {\sc Sefe} in which every edge is a polygonal curve with at most six bends, every common edge is a straight-line segment, and every two exclusive edges cross at most sixteen times.
\end{theorem}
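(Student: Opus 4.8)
The plan is to prove the theorem in three phases that assemble the machinery developed above. First, I would reduce to the antenna graphs: starting from the augmented instance in which no exclusive vertex or edge lies in the outer face of $C$ and both $R$ and $B$ are connected, introducing antennas yields planar graphs $R'$ and $B'$ with common graph $C'$ satisfying Property~\ref{pr:degree1-bis}, together with a combinatorial {\sc Sefe} $\E'$ of $R'$ and $B'$ and the strictly-convex representation $\Delta^*$ of the enclosing cycle $\delta^*$. Second, I would build a {\sc Sefe} $\Gamma'$ of $R'$ and $B'$ satisfying the three invariant properties {\sc Bends and crossings}, {\sc Embedding}, and {\sc Polygons} via the iterative face-peeling process. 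Third, I would transfer this drawing back to $R$ and $B$ using Lemma~\ref{le:degree1-replacement-bis}.

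The heart of the argument is the second phase. I would initialize with the base configuration $R''=B''=C''=\delta^*$ and $\Gamma''=\Delta^*$, in which the three invariants hold trivially, since by the initial augmentation every not-yet-drawn vertex or edge of $R'$ and $B'$ lies inside $\delta^*$, whose interior in $\Gamma''$ is empty. I would then repeatedly extend the partial {\sc Sefe} $\Gamma''$ using the two cases analysed above: in Case~1 an empty simple cycle $\delta_f$ of $C''$ still bounds an edge of $C'$ from its own $2$-connected component, and one plugs in the star-shaped straight-line drawing of the maximal $2$-connected subgraph $S_f$ (via~\cite{hn-acssdpg-10}); in Case~2 one applies Lemma~\ref{le:one-face-extension} to draw $R'(f)$ and $B'(f)$ inside the star-shaped polygon $\Delta_f$. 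Exactly as in the two case analyses, each extension strictly enlarges the drawn portion of $R'$ and $B'$ while preserving {\sc Bends and crossings}, {\sc Embedding}, and {\sc Polygons}. Since $R'$ and $B'$ are finite, the process terminates; and since the {\sc Polygons} invariant guarantees that whenever $\Gamma''$ is not yet a complete {\sc Sefe} one of the two cases applies, on termination $\Gamma''$ is a {\sc Sefe} $\Gamma'$ of $R'$ and $B'$ with at most four bends per edge, straight common edges, and at most sixteen crossings per exclusive edge pair.

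Finally, I would invoke Lemma~\ref{le:degree1-replacement-bis} with $x=y=4$ and $z=16$. Removing the antennas reinterprets each of their at most two attachment points as bends, costing at most two additional bends per edge of $R$ and of $B$, leaving common edges straight, and not altering the number of crossings between any pair of exclusive edges. This produces the desired {\sc Sefe} $\Gamma$ of $R$ and $B$ with at most six bends per edge, straight-line common edges, and at most sixteen crossings per exclusive edge pair, as claimed.

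I expect the main obstacle to lie not in any single computation but in the correctness bookkeeping of the iterative scheme: one must verify that the {\sc Polygons} invariant always exhibits an empty star-shaped polygon (strictly-convex where required) around every undrawn region, so that Case~1 and Case~2 together always make progress and never strand a piece of $R'$ or $B'$ outside an admissible polygon. This is precisely where the antenna property (Property~\ref{pr:degree1-bis}), the $2$-connected decomposition of $C'$, and the geometric guarantees of Lemma~\ref{le:one-face-extension} must interlock, and it is the step I would check most carefully.
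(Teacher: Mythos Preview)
Your proposal is correct and follows essentially the same approach as the paper: initialize with $R''=B''=C''=\delta^*$ and $\Gamma''=\Delta^*$, iterate Case~1 and Case~2 (via Lemma~\ref{le:one-face-extension}) to obtain a {\sc Sefe} $\Gamma'$ of $R'$ and $B'$ satisfying {\sc Bends and crossings}, and then apply Lemma~\ref{le:degree1-replacement-bis} with $x=y=4$, $z=16$ to pass from $R',B'$ back to $R,B$. The only detail you leave implicit is the final removal of the initial augmentation (the enclosing cycle $\delta^*$ and the connectivity-ensuring vertices), but this is harmless since restricting a {\sc Sefe} to a subgraph preserves all the stated bounds.
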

\begin{proof}
By property {\sc Bends and crossings}, every drawing $\Gamma''$ constructed by initializing $R''=B''=C''=\delta^*$ and $\Gamma''=\Delta^*$, and by then repeatedly applying Case~1 or Case~2 described above is such that every exclusive edge is a polygonal curve with at most four bends, every common edge is a straight-line segment, and every two exclusive edges cross at most sixteen times. Eventually $\Gamma''=\Gamma'$ is a {\sc Sefe} of $R'$ and $B'$. By Lemma~\ref{le:degree1-replacement-bis}, the drawing obtained from $\Gamma'$ by removing vertices and edges not in $R$ and $B$ is a {\sc Sefe} of $R$ and $B$ satisfying the required properties.
\end{proof}

\section{Conclusions} \label{se:conclusions}

In this paper we proved upper bounds for the number of bends per edge and the number of crossings required to realize a {\sc Sefe} with polygonal curves as edges. While the bound on the number of bends per edge we presented for tree-tree pairs is tight, there is room for improvement for pairs of planar graphs, as the best known lower bound~\cite{bcdeeiklm-spge-07} only states that one bend per edge might be needed. We suspect that our upper bound could be improved by designing an algorithm that constructs a simultaneous embedding of two planar multigraphs with less than four bends per edge.  A related interesting problem is to determine how many bends per edge are needed to construct a simultaneous embedding (without fixed edges) of pairs of (simple) planar graphs. The best known upper bound is two~\cite{ddlw-ccdpg-05,gl-seogpc-07,k-setbepa-06} and the best known lower bound is one~\cite{fkk-csnse-09}. As a final research direction, we mention the problem of constructing {\sc Sefe}s of pairs of planar graphs in polynomial area, while matching our bounds for the number of bends and crossings.

\subsubsection*{Acknowledgments} This research initiated at the Workshop on Geometry and Graphs, held at the Bellairs Research Institute in Barbados in March 2015. The authors thank the other participants for a stimulating atmosphere. Frati also wishes to thank Anna Lubiw and Marcus Schaefer for insightful ideas they shared during the research for~\cite{cfglms-dpespg-14}.


\begin{thebibliography}{10}

\bibitem{angelini2012testing}
P.~Angelini, G.~Di~Battista, F.~Frati, M.~Patrignani, and I.~Rutter.
\newblock Testing the simultaneous embeddability of two graphs whose
  intersection is a biconnected or a connected graph.
\newblock {\em J. Discr. Algorithms}, 14:150--172, 2012.

\bibitem{angelini2011tree}
P.~Angelini, M.~Geyer, M.~Kaufmann, and D.~Neuwirth.
\newblock On a tree and a path with no geometric simultaneous embedding.
\newblock {\em J. Graph Algorithms Appl.}, 16(1):37--83, 2012.

\bibitem{bdkw-sdpgrac-15}
M.~A. Bekos, T.~C. van Dijk, P.~Kindermann, and A.~Wolff.
\newblock Simultaneous drawing of planar graphs with right-angle crossings and
  few bends.
\newblock In M.~S. Rahman and E.~Tomita, editors, {\em WALCOM '15}, volume 8973
  of {\em LNCS}, pages 222--233. Springer, 2015.

\bibitem{bkr-sepg-13}
T.~Bl\"{a}sius, S.~G. Kobourov, and I.~Rutter.
\newblock Simultaneous embeddings of planar graphs.
\newblock In R.~Tamassia, editor, {\em Handbook of Graph Drawing and
  Visualization}, Discrete Mathematics and Its Applications, chapter~11, pages
  349--382. Chapman and Hall/CRC, 2013.

\bibitem{blasius2015disconnectivity}
T.~Bl{\"a}sius and I.~Rutter.
\newblock Disconnectivity and relative positions in simultaneous embeddings.
\newblock {\em Comp. Geom.}, 48(6):459--478, 2015.

\bibitem{bcdeeiklm-spge-07}
P.~Brass, E.~Cenek, C.~A. Duncan, A.~Efrat, C.~Erten, D.~P. Ismailescu, S.~G.
  Kobourov, A.~Lubiw, and J.~S. Mitchell.
\newblock On simultaneous planar graph embeddings.
\newblock {\em Comput. Geom. Theory Appl.}, 36(2):117--130, 2007.

\bibitem{cfglms-dpespg-14}
T.~M. Chan, F.~Frati, C.~Gutwenger, A.~Lubiw, P.~Mutzel, and M.~Schaefer.
\newblock Drawing partially embedded and simultaneously planar graphs.
\newblock In C.~A. Duncan and A.~Symvonis, editors, {\em {GD}~'14}, volume 8871
  of {\em LNCS}, pages 25--39. Springer, 2014.

\bibitem{ddlw-ccdpg-05}
E.~{Di Giacomo}, W.~Didimo, G.~Liotta, and S.~K. Wismath.
\newblock Curve-constrained drawings of planar graphs.
\newblock {\em Comput. Geom. Theory Appl.}, 30(1):1--23, 2005.

\bibitem{gl-seogpc-07}
E.~{Di Giacomo} and G.~Liotta.
\newblock Simultaneous embedding of outerplanar graphs, paths, and cycles.
\newblock {\em Int. J. Comput. Geometry Appl.}, 17(2):139--160, 2007.

\bibitem{ek-sepgfb-05}
C.~Erten and S.~G. Kobourov.
\newblock Simultaneous embedding of planar graphs with few bends.
\newblock {\em J. Graph Algorithms Appl.}, 9(3):347--364, 2005.

\bibitem{ekln-sgd-05}
C.~Erten, S.~G. Kobourov, V.~Le, and A.~Navabi.
\newblock Simultaneous graph drawing: Layout algorithms and visualization
  schemes.
\newblock {\em J. Graph Algorithms Appl.}, 9(1):165--182, 2005.

\bibitem{estrella2008simultaneous}
A.~Estrella-Balderrama, E.~Gassner, M.~J{\"u}nger, M.~Percan, M.~Schaefer, and
  M.~Schulz.
\newblock Simultaneous geometric graph embeddings.
\newblock In S.~Hong, T.~Nishizeki, and W.~Quan, editors, {\em GD '08}, volume
  4875 of {\em LNCS}, pages 280--290. Springer, 2008.

\bibitem{fowler2009spqr}
J.~J. Fowler, C.~Gutwenger, M.~J{\"u}nger, P.~Mutzel, and M.~Schulz.
\newblock An {SPQR}-tree approach to decide special cases of simultaneous
  embedding with fixed edges.
\newblock In I.~G. Tollis and M.~Patrignani, editors, {\em GD '08}, volume 5417
  of {\em LNCS}, pages 157--168. Springer, 2009.

\bibitem{frati2007embedding}
F.~Frati.
\newblock Embedding graphs simultaneously with fixed edges.
\newblock In M.~Kaufmann and D.~Wagner, editors, {\em GD '06}, volume 4372 of
  {\em LNCS}, pages 108--113. Springer, 2007.

\bibitem{fkk-csnse-09}
F.~Frati, M.~Kaufmann, and S.~G. Kobourov.
\newblock Constrained simultaneous and near-si\-mul\-ta\-neous embeddings.
\newblock {\em J. Graph Algorithms Appl.}, 13(3):447--465, 2009.

\bibitem{geyer2009two}
M.~Geyer, M.~Kaufmann, and I.~Vrt'o.
\newblock Two trees which are self-intersecting when drawn simultaneously.
\newblock {\em Discrete Mathematics}, 309(7):1909--1916, 2009.

\bibitem{ghkr-dsegfb-14}
L.~Grilli, S.-H. Hong, J.~Kratochv{\'\i}l, and I.~Rutter.
\newblock Drawing simultaneously embedded graphs with few bends.
\newblock In {\em GD '14}, volume 8871 of {\em LNCS}, pages 40--51. Springer,
  2014.

\bibitem{hn-acssdpg-10}
S.~Hong and H.~Nagamochi.
\newblock An algorithm for constructing star-shaped drawings of plane graphs.
\newblock {\em Comput. Geom. Theory Appl.}, 43(2):191--206, 2010.

\bibitem{k-setbepa-06}
F.~Kammer.
\newblock Simultaneous embedding with two bends per edge in polynomial area.
\newblock In L.~Arge and R.~Freivalds, editors, {\em SWAT '06}, volume 4059 of
  {\em LNCS}, pages 255--267. Springer, 2006.

\end{thebibliography}

\end{document}